\pgfplotsset{compat=1.8}
\newtheorem{remark}{Remark}
\newtheorem{claim}{Claim}
\newcommand{\contention}{C}
\newcommand{\punt}[1]{}
\def\@copyrightspace{\relax}
\newcommand{\defn}[1]       {{\textit{\textbf{\boldmath #1}}}}
\newcommand{\E}{\mathbb{E}}
\renewcommand{\paragraph}[1]{\vspace{0.09in}\noindent{\bf \boldmath #1}} 
\newcommand{\poly}{\mbox{poly}}
\newcommand{\polylog}{\mbox{polylog}}
\date{}
\newcommand{\namedcomment}[3]{\xspace}
\renewcommand{\namedcomment}[3]{{\sf \scriptsize \color{#2} #1: #3}}
\renewcommand{\epsilon}{\varepsilon}
\newcommand{\secref}[1]         {Section~\ref{sec:#1}}
\newcommand{\lemlabel}[1]   {\label{lem:#1}}
\newcommand{\lemref}[1]         {Lemma~\ref{lem:#1}}
\renewcommand{\eqref}[1]          {Eq.~\ref{eq:#1}}
\newcommand{\defref}[1]         {Definition~\ref{def:#1}}
\renewcommand{\poly}{\operatorname*{poly}}
\renewcommand{\polylog}{\operatorname*{\polylog}}
\newif\ifarxiv
\newif\ifconf
\def\BibTeX{{\rm B\kern-.05em{\sc i\kern-.025em b}\kern-.08emT\kern-.1667em\lower.7ex\hbox{E}\kern-.125emX}}
\title{Contention Resolution \emph{without} Collision Detection}\titlenote{This work was supported in part by NSF Grants 
\author{Michael A. Bender}
\affiliation{%
  \institution{Dept of CS, Stony Brook University}
  \city{Stony Brook}
  \state{NY}
  \country{USA}
}
\email{bender@cs.stonybrook.edu}
\author{Tsvi Kopelowitz}
\affiliation{%
 \institution{Dept. of CS, Bar-Ilan University}
  \city{Ramat Gan}
 \country{Israel}}
\email{kopelot@gmail.com}
\author{William Kuszmaul}
\affiliation{%
 \institution{CSAIL, MIT}
  \city{Cambridge}
 \state{MA}
 \country{USA}}
\email{kuszmaul@mit.edu}
\author{Seth Pettie}
\affiliation{%
 \institution{Dept. of CS, University of Michigan}
 \city{Ann Arbor}
 \state{MI}
 \country{USA}}
\email{seth@pettie.net}
\begin{document}
\sloppy 


\begin{abstract}
This paper focuses on the contention resolution problem on a shared communication channel that does not support collision detection.  A shared communication channel is a multiple access channel, which consists of a sequence of synchronized time slots.  Players on the channel may attempt to broadcast a packet (message) in any time slot.  A player's broadcast succeeds if no other player broadcasts during that slot. If two or more players broadcast in the same time slot, then the broadcasts collide and both broadcasts fail.  The lack of collision detection means that a player monitoring the channel cannot differentiate between the case of two or more players broadcasting in the same slot (a collision) and zero players broadcasting.  In the contention-resolution problem, players arrive on the channel over time, and each player has one packet to transmit. The goal is to coordinate the players so that each player is able to successfully transmit its packet within reasonable time. However, the players can only communicate via the shared channel by choosing to either broadcast or not.  A contention-resolution protocol is measured in terms of its throughput (channel utilization).  Previous work on contention resolution that achieved constant throughput assumed that either players could detect collisions, or the players' arrival pattern is generated by a memoryless (non-adversarial) process.

The foundational question answered by this paper is whether collision detection is a luxury or necessity when the objective is to achieve constant throughput.  We show that even without collision detection, one can solve contention resolution, achieving constant throughput, with high probability.
\end{abstract}



%
%
\begin{CCSXML}
<ccs2012>
<concept>
<concept_id>10003752.10003809.10010170</concept_id>
<concept_desc>Theory of computation~Parallel algorithms</concept_desc>
<concept_significance>500</concept_significance>
</concept>
</ccs2012>
\end{CCSXML}

\ccsdesc[500]{Theory of computation~Parallel algorithms}

\keywords{backoff, throughput, parallelism, networks}

\maketitle




\section{Introduction}

In the abstract contention resolution problem, there are multiple players that need to coordinate temporary and exclusive access to a shared resource. In this paper we use the terminology of one particular application, namely of many players, each of which must successfully transmit a single \defn{packet} on a shared multiple-access communications channel.  Contention resolution schemes are applied to many fundamental tasks in computer science and engineering,
such as wireless communications using the IEEE 802.11 family of standards~\cite{802.11-standard},
transactional memory~\cite{HerlihyMo93},
lock acquisition~\cite{RajwarGo01}, email retransmission~\cite{Bernstein98,CostalesAl02},
congestion control (e.g., TCP)~\cite{mondal:removing,jacobson:congestion}, and a variety of cloud computing applications~\cite{google:gcm,google:best-practices,amazon:error-retries}.

The classic algorithm for dealing with contention is the exponential backoff protocol~\cite{MetcalfeBo76}.
The idea behind exponential backoff is that when a player $p$ has a packet to send on the channel, then $p$ must keep attempting broadcasts until $p$ has a successful transmission
(in which no other players were broadcasting).
If a broadcast by $p$ fails (due to a collision), then $p$ waits for a random amount of time, proportional to how long $p$ has been in the system, and then $p$ attempts another broadcast.

\paragraph{Multiple access channels and the contention-resolution problem.}
Formally, the shared communication channel is modeled as a
\defn{multiple access channel}, which consists of a sequence of
synchronized time slots (sometimes also called steps).  Players on the
channel may attempt to broadcast a packet (message) in any time slot.
A player's broadcast \defn{succeeds} (successfully transmits) if
no other player broadcasts during that slot.  If two or more players
broadcast in the same slot, then the broadcasts \defn{collide} and fail.

Although the slots are synchronized among all the players, there is no
notion of a global clock (i.e., the players do not share a common
time).  Moreover, players are anonymous, i.e., players do not have
ids. Players can see when successes occur on the channel, allowing for
them to, for example, reduce their broadcast frequency when they see a
long time interval without any successes. Players cannot distinguish
slots in which a collision occurred from slots in which no broadcast
attempts were made, however.

In the \defn{contention-resolution problem}, players arrive on the
channel dynamically over time.  A player $p$ has one packet that $p$
needs to broadcast.  Player $p$ automatically leaves the system once
$p$'s packet has been successfully transmitted. The goal is to
coordinate the players so that each player is able to successfully
transmit its packet within reasonable time.  We are interested in
adversarial arrivals, that is, arrivals determined by an adaptive
adversary. The adversary is able to see which time slots contain
successes, and which player has succeeded in each of those steps.  The
input to the contention-resolution problem can be either a finite
stream of $n$ players, or an infinite stream of players.

\paragraph{Metrics.} 
The primary objective of contention resolution is to optimize the \defn{implicit throughput} (sometimes simply referred to as the throughput) of the channel.  A slot is \defn{active} if at least one player is in the system during that slot.
For an $n$-player stream, the implicit throughput is defined to be $n$ divided by the total number of active slots. More generally, if $n_t$ is the number of players that arrive by time $t$ and $s_t$ is the number of active slots until time $t$, then the \defn{implicit throughput} at time $t$ is $\frac{n_t}{s_t}$, and the goal is to show that for each time~$t$, the throughput is lower bounded by some positive constant with high probability in $n_t$. In other words, the number of active steps should be no more than a constant factor larger than the number of player arrivals.

By considering the ratio between players and active steps (rather than the ratio between successes and active steps), the implicit throughput gives each player the leeway to each cause collisions in a constant number of active steps without substantially damaging the metric performance. Bounds on the implicit throughput immediately imply bounds on player successes, in the following sense: If the implicit throughput is some constant $c$ at step $t$, and if the number of players that arrive by time $t$ is less than $\epsilon t / c$, then at least one of the most recent $\epsilon t$ steps must be inactive, meaning that any player that arrived prior to step $(1 - \epsilon)t$ has succeeded. In Corollary \ref{cor:suffixes}, we present more general results relating constant implicit throughput to player successes.

Subject to small implicit throughput, we also want to minimize the number of \defn{broadcast attempts} (i.e., we want the players to be energy efficient).  That is, for an $n$-player input, players should, on average, make at most $\mbox{polylog} \; n$ attempts to broadcast.

\paragraph{Achieving constant throughput with few broadcast attempts.}
In this paper, we show that constant throughput is achievable, with high probability in the number of participating players. Moreover, if $n$ players arrive in the first $t$ steps, then with high probability in $n$, the average number of broadcast attempts per player is $O(\log^2n)$.

\paragraph{Existing protocols and collision detection.}
Exponential backoff cannot even come close to achieving constant throughput (see the landmark paper by~\cite{Aldous87} and \secref{technical}).
Indeed, as we explain in \secref{technical}, it is not hard to construct a situation where an arbitrarily small constant fraction of steps contain player arrivals, but at most a $1/\poly(n)$ fraction of the slots are successes~\cite{BenderFaHe05}.

The challenge with designing  contention-resolution protocols on a multiple-access channel is how to interpret failures in slots. Are these failures due to collisions or silence? How can a player use the information provided by the channel to  adjust 
the broadcast probabilities?
The players' behavior provably needs to be a function of what they hear on the channel. That is, the players need to adjust (increase or decrease) their broadcast probabilities based on observing the channel (which slots have successful transmissions).
Until now it has been unknown how to use such a mechanism to  achieve $\Theta(1)$ throughput.

Thus, previous claims on constant throughput rely on restricting the model of player arrivals \cite{GoodmanGrMaMa88,RaghavanUp99,
  GoldbergMa96,HastadLeRo96,goldberg:contention} and/or altering the communication channel to convey extra information to the players---for example, by having a \defn{collisions detector}~\cite{BenderFGY19,BKPY16,CKPWZ17,ChangJiPe19}. 

The idea of collision detection is that when a player listens on the channel, not only does the player hear successful transmissions, but also \emph{the player can distinguish between slots that are empty and slots with collisions.}\footnote{Traditional hardware does not support collision detection in a reasonable way~\cite{AY17}.}
Collision detection allows players to differentiate between periods of true quiescence and high contention, typically increasing broadcast probabilities in the former case and decreasing in the latter. Collision detection has been a critical tool in designing protocols, e.g.,
by allowing protocols to use ``busy signals'' as a synchronization mechanism~\cite{BenderFGY19} 
or a multiplicative-weight update approach to discovering optimal transmission probabilities~\cite{ChangJiPe19}.
Achieving constant throughput with collision detection is relatively straightforward using the notion of a busy signal\footnote{A \defn{busy signal} occurs when some synchronized subset of players all broadcast on every other step in order to signify to all other players in the system that they should be silent; this allows for the synchronized subset of players to all achieve their successes without ever having to interact with newly arriving players.}; it becomes more challenging to simultaneously achieve other properties, such as 
resistance to a jamming adversary~\cite{BenderFGY19,ChangJiPe19} or
guaranteeing $O(1)$ expected transmission attempts~\cite{ChangJiPe19} per player.

In order to eliminate the need for collision detection, researchers
have considered weaker versions of the
contention-resolution problem. De Marco and Stachowiak considered a
setting in which, even after a player successfully transmits, it can
continue to send messages within the system \cite{DeMarcoSt17}. This
allows for a single player to be elected as a leader, and then to send
a large stream of messages (one message ever $O(1)$ steps) in order to
help synchronize other players. In this setting the authors were able
to achieve constant throughput\footnote{The definition of throughput
  used in \cite{DeMarkoKoSt18} is slightly weaker than the one in this
  paper. Rather than considering the ratio of players to active slots,
  they consider the ratio of players to the longest lifespan of any
  player (i.e., the most consecutive active slots during which any individual player is active).}~for a finite stream of messages
\cite{DeMarkoKoSt18}. Garncarek et al. considered a related problem in
which players never leave the system, and instead are given new
messages to send \cite{GarncarekJuKo18}; the algorithms for this
problem have required that players have access to a global clock,
however. Whether constant throughput can be
achieved for the vanilla contention-resolution problem, without the
use of conflict detection, has remained an open question.

\paragraph{Our results.}
In this paper we prove that it \emph{is} possible to solve contention resolution with constant throughput without collision detection.  
In particular, 
let $n_t$ be the total number of players to arrive in the first $t$ steps. We guarantee 
that with high probability in $n_t$, 
the protocol achieves throughput of $\Omega(1)$ 
in the first $t$ steps.\footnote{Without loss of generality, $n_t \le O(t)$. In particular, if $n_t \ge \Omega(t)$, then the number of player arrivals per active step will necessarily be $\Omega(1)$, regardless of the protocol.}
Moreover, the players make a total of 
$O(n_t \log^2 n_t)$ transmission attempts 
in the first $t$ steps.  

We also investigate the contention-resolution problem in the presence of adversarial jamming (the adversary has the power to block the broadcast attempts in some slots). Here, we prove that collision detection is fundamentally necessary. Whereas algorithms using collision detection \cite{BenderFGY19, ChangJiPe19} can handle a small fraction of slots being jammed while still maintaining high-probability guarantees on the throughput, we prove a lower bound prohibiting any algorithm without collision detection from achieving even a \emph{first} broadcast in time linear in the number of players with better than constant probability in the number of players. 
Moreover, this lower bound holds even when the adversary determining the jammed slots and arrival times of players is oblivious.

\paragraph{Additional related work.} 
Willard~\cite{willard:loglog} considered a contention-resolution
problem where the goal is to minimize the \emph{first} time that
any player transmits.  Sharp time bounds of
$\Theta(\log\log N)$ (in expectation) are proved when the $N$ players
begin at the same time.  

For many years, most of the analytic results on backoff assumed
statistical queuing-theory models and focused on the question of what
packet-arrival rates are stable
(see~\cite{GoodmanGrMaMa88,RaghavanUp99,
  GoldbergMa96,HastadLeRo96,goldberg:contention}).  Interestingly,
even with Poisson arrivals, there are better protocols than binary
exponential backoff, such as polynomial backoff~\cite{HastadLeRo96}.
The notion of saturated throughput---roughly, the maximum
throughput under stable packet arrival rates---has been
examined~\cite{bianchi:performance,song:stability}. The guarantees in
our paper are much stronger because we guarantee constant utilization
for \emph{arbitrarily large} arrival rates.

There has been work on adversarial queueing theory,
looking at the worst-case performance of these
protocols~\cite{BenderFaHe05,
  GreenbergFlLa87,willard:loglog,GoldbergJeLeRa97,GoldbergMaRa99,BenderFiGi06,FernandezAntaMoMu13,ChlebusKoRo12,ChlebusKoRo06,anantharamu:adversarial-opodis}.
A common theme throughout these papers, however, is that dynamic
arrivals are hard to cope with.  When all the players begin at the
same time, efficient protocols are
possible~\cite{GoldbergJeLeRa97, BenderFaHe05,BenderFiGi06,
  Gereb-GrausTsa92,GreenbergFlLa87,JACM::GreenbergW1985,willard:loglog,FernandezAntaMoMu13}.
When players begin at different times, the problem is harder.
The dynamic-arrival setting has been explicitly studied in the context
of the wake-up
problem~\cite{chlebus:better,chlebus:wakeup,chrobak:wakeup}, which
looks at how long it takes for a single transmission to succeed when
packets arrive dynamically.

A number of elegant results exist on contention resolution when the
channel is subject to (possibly malicious)
noise~\cite{awerbuch:jamming,richa:jamming2,richa:jamming3,richa:jamming4,
  ogierman:competitive,richa:efficient-j,richa:competitive-j}. A
recent result~\cite{BenderFiGi16, ChangJiPe19} also addresses
worst-case online arrivals of players and, in the face of an unknown
$j$ noisy slots scheduled by an adaptive adversary, achieves expected
constant utilization with an expected $\operatorname{polylog}(n + j)$
number of broadcasts per player.

\paragraph{Paper outline.} The remainder of the paper proceeds as follows. In Section \ref{sec:technical} we give a technical overview of our algorithm and of the ideas behind its design. In Section \ref{sec:components} we describe the algorithm formally, and in Section \ref{sec:prelims} we give technical preliminaries needed for the analysis of the algorithm. In Sections \ref{sec:one_success} and \ref{sec:one_success2}, we develop lemmas bounding the time needed by protocols in order to achieve their \emph{first} success. These lemmas then play a critical role in Section \ref{sec:throughput}, where we prove our main result concerning the (implicit) throughput of our algorithm. Finally, in Sections \ref{sec:energy} and \ref{sec:jamming}, we analyze the energy efficiency of our algorithm (i.e., the average number of broadcast attempts by each player) and we prove a lower bound prohibiting algorithms from behaving well (with better than constant probability) in the presence of a jamming adversary.

\section{Technical Overview}

\label{sec:technical}


We begin by discussing what causes exponential backoff to fail, and why batch protocols are easier.

\paragraph{Contention and exponential backoff.}
Define the contention $\contention_t$ on a slot $t$ to be the sum of the broadcast probabilities on that slot, i.e., the expected number of players that attempt a broadcast during that slot.
In order for a slot $t$ to have a constant probability of a successful transmission, $\contention_t$ must be $\Theta(1)$. When $\contention_t=\Omega(1)$, the probability of a successful transmission in slot~$t$  is $2^{-\Theta(\contention_t)}$, and when $\contention_t = O(1)$ the probability of a successful transmission is $\Theta(\contention_t)$.
 Thus, if we want $n$ players to run a contention-resolution protocol and achieve a constant throughput, then we need $\Theta(n)$ of the active slots to have constant contention.


 To develop intuition, we use a contention-based argument to show that
 exponential backoff does not achieve $\Theta(1)$ throughput.
 
 
In exponential backoff, the probability that a player broadcasts in slot~$t$ (conditioned on no prior successful transmission) is $\Theta(1/t)$.
This means that if a player is in the system for $\Theta(n)$ time steps, then that player makes $\Theta(\log n)$ broadcast attempts.
Now suppose all $n$ players spend $\Theta(n)$ time in the system (which an adversary can guarantee with a $\Theta(n)$-sized burst of player arrivals). 
Then the sum of the contentions of all slots is $\Theta(n\log n)$.
Achieving constant throughput means having $\Theta(n)$ active slots, which means an average of $\Theta(\log n)$ contention per slot.
If an adversary spreads player arrivals over $\Theta(n)$ slots (after the initial burst of arrivals), then this contention is distributed so that every slot has $\Omega(\log n)$ contention. But then at most a $1/\poly (n)$ fraction of the packets can have been successfully transmitted, and there cannot be constant throughput.

\paragraph{Why the batch problem is easier.}
In contrast, a synchronized \defn{batch protocol}, where all the players arrive at the same time, \emph{can} achieve constant throughput~\cite{BenderFaHe05}.
Because the players start synchronized, they can use repeated doubling to guess the value of $n$.  Then as players succeed, the players progressively back on, increasing the broadcast probabilities as more players transmit successfully and leave the system.
The success of a batch strategy is not a consequence of having small average contention. In fact, the batch strategy described above has an average contention of $\Theta(\log^2 n)$~\cite{BenderFaHe05}.  However, because the players are synchronized, this contention is spread out \emph{unevenly} so that a constant fraction of the slots do have constant contention.

Interestingly, even when the players arrive in a batch, exponential backoff does not achieve constant throughput due to the fact that it backs off in its broadcasting probabilities but does not back on as players succeed~\cite{BenderFaHe05}.
It is worth understanding how a batch instance of exponential backoff fails to achieve constant throughput, because our protocol uses this analysis to its advantage.
When an $n$-player batch instance of exponential backoff starts, at first the contention is too high, and essentially no slots are successful. But about $\Theta(n)$ slots into the protocol, there  are $\Theta(n)$ slots, all of which have $\Theta(1)$ contention. Then, the contention gradually drops to $o(1)$ with the result that stragglers stay in the system for $\Omega(n\, \mbox{polylog}(n))$ time before they all succeed in transmitting~\cite{BenderFaHe05}.
Thus, for exponential backoff on batch arrivals, even though the protocol does not achieve constant throughput overall, a constant fraction of the first $\Theta(n)$ time steps are successful.


\subsection{Components of Our Protocol}

The players use the even-numbered slots and the odd-numbered slots to simulate two separate channels. Because the players cannot access a global clock, there is no global agreement as to which channel is odd-numbered versus even-numbered.

At any given moment, each player in our protocol performs a variant of exponential backoff on either one or both of the channels. The key algorithmic contribution of the protocol is a simple set of rules that allow for each player to decide on which channel(s) to perform exponential backoff (and with what backoff parameters) in a way that ensures high throughput.

\paragraph{Two alternating channels and an invariant.}
Channels are used to run synchronized batches, while maintaining the invariant that at any time, at most one of the channels is running a batch protocol, and the other channel is silent.

When a synchronized collection of players begin a batch protocol on one channel, they jam the other channel in order to keep it silent.

The jamming is performed probabilistically so that, once the batch protocol has run for time roughly proportional the number of players involved in it, the jamming on the silent channel will fail with high probability.

When a successful transmission occurs on the silent channel, the two channels then reverse roles, with the previously silent channel becoming the batch channel.

\paragraph{Selecting which channel \emph{not} to be on. }When a player first arrives in the system, the player has no way of knowing which channel is currently in batch-mode and which is currently silent. Ideally, the player would stay silent until seeing a successful transmission on one of the two channels, allowing them to identify which channel is currently in batch-mode. The player cannot do this, however, since it may be that the only players currently in the system are newly arrived players, and that no channel is currently performing a batch. Thus, newly arriving players select a channel arbitrarily and perform exponential backoff on that channel.

The newly arrived player continues to perform the exponential backoff until seeing at least one success on some channel, at which point the player knows to treat that channel as the current batch channel. Once the player has identified the current batch channel, the player queues (i.e., performs backoff) on the silent channel and waits to join the next batch protocol.

\paragraph{Exponential backoff: give up on constant throughput and aim for one success.} When new players enter the system, the exponential backoffs that they perform can have the effect of essentially jamming the current batch-protocol channel.

Nonetheless, we show that the total time that a batch-protocol spends being jammed by new arrivals is, with high probability, at most proportional to the size of the batch plus the total number of new arrivals. The key insight is that, although the exponential backoffs performed by new arrivals do not obtain good throughput (and, can in fact severely diminish throughput), they are efficient at achieving at least one (total) success in time proportional to the number of exponential backoffs being concurrently performed.

In particular, consider $n$ new arrivals that all perform exponential backoffs on some channel (and with different arrival times). In order for there to be collisions in all $\Theta(n)$ of the next slots of the channel, the contention in each slot must be at least $c\log n$ for a sufficiently large~$c$. A simple counting argument for exponential backoff shows that there are not enough broadcasts in total to achieve contention $c\log n$ in more than a constant fraction of the slots. Thus, with high probability, although most slots have collisions, at least one slot will contain a success. Moreover, this continues to be true even if other activity is occurring on the same channel (e.g., a batch operation), unless that other activity is so dense that almost all of the $n$ slots would have contained collisions anyway.

\paragraph{Three phases of the protocol.}  We now describe the protocol from the perspective of a player arriving in the system.





\paragraph{Phase 1: Channel choosing.}  
The player arbitrarily chooses a channel and runs a version of an exponential-backoff protocol. In particular, the player performs \defn{$c$-backoff} for some large constant $c$, in which for every interval of the form $(c^\ell, c^{\ell + 1}]$, the player selects $c$ random steps during which to broadcast.

The player continues to run the backoff protocol up until it sees a
successful slot on some channel.  Once the player sees a successful
slot, then the player switches to the channel on which the success
\emph{did not occur}, and starts phase 2.  Note that the players
running phase 1 are not synchronized. Moreover, the successful slot
that the player sees could come from a player running a later phase.

\paragraph{Phase 2: Batch synchronization.}
The player runs another $c$-backoff  protocol on its chosen channel, up until it sees a successful slot on that channel.  
Once the player sees a successful slot, then it starts phase 3 (again, on that channel).

Note that the players running phase 2 are not synchronized, and can even conflict with players running phase 1 on the same channel. However, all the players running phase 2 on the channel switch to phase 3 at the same time.

\paragraph{Phase 3: Batch execution.}
The player runs a third (modified and tuned) backoff protocol, with the purpose of achieving a constant fraction of successes during the protocol. At step $t$ of the batch protocol, each player broadcasts with probability exactly $\frac{1}{t}$.


While the player runs the batch protocol on its channel, it also runs a jamming protocol on the other channel. The purpose of the jamming is to prevent a new batch from immediately starting on the other channel.
A player running the jamming protocol does not jam in every slot. Rather $t$ time steps into the jamming protocol, the player broadcasts (jams) on the other channel with probability $\Theta((\log t)/t)$.  
The batch protocol ends when there is a jamming failure, resulting in a successful transmission on the other channel.
At this point, any remaining players in the batch immediately switch to the batch protocol starting on the other channel.

The parameter $\Theta((\log t)/t)$ simultaneously serves two purposes. 
First, because the players jam probabilistically, there are not too many additional broadcast attempts per player, which helps minimize the subsidiary metric.
Second, this imperfect jamming allows the batch protocol to end quickly once its length is sufficiently large in $\Omega(b)$, where $b$ is the number of players participating in the protocol; this prevents the protocol from continuing into the regime where the broadcast probabilities of players are too low to result in consistent successes.\footnote{Because the probabilistic jamming performed by the batch-participants naturally fails on its own, it may seem unnecessary for the players in phase 2 to also perform exponential backoff on the silent channel. The purpose of the phase-2 exponential backoff is primarily to handle the case in which, in fact, no batch operation is occurring on the other channel (or all of the participants in the batch operation have succeeded).}\footnote{There are technical reasons as to why having players in the batch ever increase their broadcast probability is problematic (in particular, it makes the potential interference between players in phase 1 and players in phase 3 much more problematic), and so by using a probabilistic jamming protocol, the batch ends naturally without players having to increase their broadcasting probabilities to make the continuation of the protocol effective.}




\subsection{Analysis Overview}

In order to analyze the implicit throughput of our algorithm, an essential idea is to charge the length of each component of the algorithm to the number of player arrivals and successes that occur during that portion of the algorithm. Roughly speaking, each batch operation will, with high probability in the number of participating players, either contain a large number of new player arrivals (proportional to the length of the batch operation) or a large number of player successes (proportional to the length of the batch operation).

\paragraph{Analyzing batch operations with low interference. }The first step in the analysis is to show that, if a batch operation consists of $n$ players, and fewer than $\epsilon n$ new players arrive during the batch operation (for some constant $\epsilon \in (0, 1)$), then the first $O(n)$ slots of the batch operation contain $\Theta(n)$ successful transmissions (with high probability in $n$).

Let $I$ denote the interval of steps $n/2,  n/2 + 1, \ldots, n/2 + dn$  in the batch operation for some large constant $d$. Let $I_j \subseteq I$ be the interval of steps between the $j$-th and $(j + 1)$-th successes in $I$. Call $I_j$ \defn{light} if the number of new players that arrive during $I_j$ is significantly smaller than $|I_j|$, and \defn{heavy} otherwise. We show that for all $t \in \mathbb{N}$, the probability of an interval $I_j$ being both light and of length $t$ or greater is at most $\frac{1}{\poly(t)}$. This implies that,
$$\sum_{j \in [1, n/10]} \begin{cases} |I_j| \text{ if }I_j \text{ light} \\ 0 \text{ otherwise} \end{cases} < \frac{dn}{2},$$
with high probability in $n$. 

Since the sum of the lengths of the heavy intervals $I_j$ is necessarily $O(n)$ (recall that heavy intervals are densily filled with arrivals), it follows that
$$\sum_{j \in [1, n/10]} |I_j| < dn,$$ and thus that there are, with
high probability in $n$, at least $n/10$ successes during the interval $I$.\footnote{Note that additional care must also be taken to ensure that the \emph{very first} success in $I$ occurs within a reasonably small time frame.}

\paragraph{Analyzing batch operations that overstay their welcome. }During a batch operation $\mathcal{B}$ with $n$ participants, the silent channel is jammed probabilistically in order to try to ensure that, with high probability in $n$, the jamming fails at some point in the first $O(n)$ steps (but not within the first $dn$ steps for the constant $d$ used in the analysis above). However, activity by other players in the system (in either phases 1 or 2) could potentially interfere with the termination of the batch (by preventing a successful transmission on the silent channel).

To handle this, we consider the \defn{amortized length} of a batch $\mathcal{B}$, which is defined to be $0$ if the sum of the number of successes plus the number of new arrivals during the batch is at least $\Omega(\ell)$, where $\ell$ is the true length of the batch; and to be the true batch length $\ell$ otherwise. In the former case, we consider the length of $\mathcal{B}$ to be charged, in an amortized sense, to the successes and new arrivals during the batch. The nice property that amortized lengths satisfy is that, for all $t \in \mathbb{N}$, the probability of a given batch $\mathcal{B}$ having amortized length $t$ or greater is $\frac{1}{\poly(t)}$.

\paragraph{A unified analysis of first successes. }Many parts of the algorithm analysis require us to argue that, under certain conditions, the probability of there being a long window of silence is small. This is necessary  both to bound the sum of the lengths of the light intervals during a batch operation, as well as to establish that batch operations do not have large amortized lengths.

In order to unify these analyses, we define the notion of a \defn{balanced protocol} $\mathcal{P}$, in which many (possibly non-synchronized) players are all following back-off-like protocols simultaneously. We use a balls-in-bins style analysis to show that, as long as the average contention is not too large during a balanced protocol, then at least one success will occur with high probability.

The analysis of the first success in a balanced protocol splits into two cases. When the average contention of the protocol is very small, one can focus on a constant number of players and show that, with high probability, one of them achieves a successful transmission. On other hand, when the average contention of the protocol is larger (but still not too large), then the expected number of total successes becomes polynomially large, and thus it suffices to prove a concentration bound on the number of successes. To do this, we take advantage of the fact that each time a player makes a randomized decision about when to broadcast, the decision affects the total number of successful transmissions by at most $\pm 1$; this enables the use of McDiarmid's inequality in order to prove the desired concentration inequality.

\paragraph{Analyzing throughput and energy efficiency. }To analyze the (implicit) throughput of the system, we combine the analysis over many batches in order to show that, with high probability, the combined sizes of the batches that achieve poor throughput can be amortized to the total number of player arrivals in the system. 

To analyze energy efficiency, we wish to show that each player, on average, only makes $O(\log^2 n)$ total broadcast attempts in the first $O(n)$ active steps. Since each player participates in phases 1 and 2 at most once, these phases contribute at most $O(\log n)$ attempted broadcasts per player. A single player could potentially engage in many different batches (and thus many different instances of phase 3). Nonetheless amortizing the broadcast attempts in each batch either to the successes during that batch, or to the new arrivals that occur during that batch, we can prove a high-probability bound on the average number of broadcast attempts per player.

\section{Protocol Components}\label{sec:components}
Let $c, c_1, c_2$ be large constants, with $c_1$ sufficiently large as a function of $c$, and $c_2$ sufficiently large as a function of $c_1$. (One can think of $c_1$ as being the geometric mean between $c_2$ and $c$). The algorithm will be defined in terms of $c$ and $c_2$;  the role of $c_1$ will appear only in the algorithm analysis.

Let $\alpha_1, \alpha_2$ refer to
the odd-numbered-steps and even-numbered-steps, also known
as \defn{channels}; for a given channel $\alpha$, we use
$\overline{\alpha}$ to refer to the other channel. 
When we say that a player $p$ executes some protocol on channel $\alpha$ we mean that $p$ skips over steps in $\overline \alpha$, and uses only steps in $\alpha$ for the protocol.

\subsection{$\mathbold{c}$-Backoff}
We begin by introducing a simple exponential backoff strategy called $c$-backoff, which is used in phases $1$ and $2$.
The goal of $c$-backoff is to guarantee a success within a reasonable number of steps, thereby providing a synchronization mechanism.

\begin{definition}\label{def:c_backoff}
For integer $c \ge 2
$,
  a player performs a \defn{$c$-backoff protocol} starting at time-step $t$ as follows: For each $\ell \in \mathbb{N}$ where $\ell\ge 1$, the player selects a broadcast set $B_\ell$ of $c$ random (and not necessarily distinct) time steps in the range $R_\ell=(t + c^{\ell}, t + c^{\ell+1}]$.  The player then broadcasts during all time-steps in $B_\ell$ for all $\ell$.
\end{definition}

Notice that in \defref{c_backoff} there is no stopping condition, even
after a success.  This is by design since in our algorithms there may
be a subset of the players in the system executing one protocol (such
as $c$-backoff), while additional players may be executing another
protocol, and so the executions may be jamming each other. Thus,
having exactly one broadcasting player from the players that are
executing $c$-backoff at a given time step does not guarantee a true
success for that player. As a result, it useful for analysis to treat
$c$-backoff as continuing indefinitely.


\begin{lemma}\label{lem:single_player_num_of_broadcasts}
  If a player $p$ performs $c$-backoff for $\tau$ time, then the number of times $p$ broadcasts is at most $c\log_c \tau$.

\end{lemma}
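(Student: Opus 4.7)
The plan is to directly count the broadcasts epoch by epoch. By Definition~\ref{def:c_backoff}, the player's broadcasts are partitioned across the ranges $R_\ell = (t + c^\ell, t + c^{\ell+1}]$ for $\ell \geq 1$, with exactly $c$ broadcasts scheduled inside each $R_\ell$. So the entire argument reduces to bounding the number of epochs $\ell$ that can contribute any broadcast within the $\tau$-step window $(t, t + \tau]$.

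First, I would observe that any broadcast contributed by epoch $\ell$ occurs at a time strictly greater than $t + c^\ell$. Hence, if such a broadcast falls inside the interval $(t, t + \tau]$ during which $p$ is executing the protocol, then $c^\ell < \tau$, which gives $\ell < \log_c \tau$. In particular, the set of epochs that can place any broadcast in the window is contained in $\{1, 2, \dots, \lfloor \log_c \tau \rfloor\}$, which has size at most $\log_c \tau$.

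Next, I would multiply: since each contributing epoch schedules at most $c$ broadcasts, the total number of broadcasts by $p$ during the $\tau$ steps is at most $c \cdot \log_c \tau$, as claimed.

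There is no real obstacle here; the only subtlety is being careful about whether the bound is $\lfloor \log_c \tau \rfloor$ or $\log_c \tau$ and whether epoch $\ell=0$ could contribute (it cannot, since epochs start at $\ell \geq 1$ by the definition, and in any case a single constant factor is absorbed into the stated bound). The proof is a two-line counting argument, so I would present it as such without further machinery.
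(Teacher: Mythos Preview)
Your proposal is correct and follows essentially the same approach as the paper: count the epochs $\ell$ that can intersect $(t, t+\tau]$, observe that this forces $c^\ell < \tau$ (equivalently, $\ell \le \lceil \log_c \tau\rceil - 1$), and multiply by the $c$ broadcasts per epoch. The paper's proof is the same two-line count, phrased via $\lceil \log_c \tau\rceil - 1$ rather than $\lfloor \log_c \tau\rfloor$, which amounts to the same bound.
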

\begin{proof}
  Until time $c^{\lceil \log_c \tau \rceil} \ge \tau$, $p$  broadcasts $c$ times for each $1\le \ell \le \lceil \log_c \tau \rceil-1$, for a total of $c(\lceil \log_c \tau \rceil-1) \le c\log_c \tau$ broadcasts.
\end{proof}

\begin{lemma}\label{lem:prob_of_single_broadcast}
If a player $p$ performs $c$-backoff starting at time $t$, then at time $t+\tau$ the probability that $p$ broadcasts is at least $\frac{1}{\tau}$ and at most $\frac{2c}{\tau}$.
\end{lemma}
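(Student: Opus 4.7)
The plan is to pin down the unique range $R_\ell$ containing time $t + \tau$ and then evaluate the broadcast probability in closed form. Since the $c$ broadcast slots in $B_\ell$ are chosen i.i.d.\ uniformly from $R_\ell$, independently of the choices in other ranges, the probability that $p$ broadcasts at time $t + \tau$ is exactly $P = 1 - (1 - 1/r)^c$, where $r := |R_\ell| = c^{\ell+1} - c^\ell = c^\ell(c-1)$ and $\ell \ge 1$ is the unique integer with $c^\ell < \tau \le c^{\ell+1}$. The whole proof therefore reduces to sandwiching $P$ between $1/\tau$ and $2c/\tau$.

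For the upper bound, I would apply a union bound over the $c$ broadcast choices to get $P \le c/r = c/(c^\ell(c-1))$. Using $\tau \le c^{\ell+1} = c \cdot c^\ell$ and the inequality $c/(c-1) \le 2$ (valid for $c \ge 2$) then yields $P \le 2c/\tau$ immediately.

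For the lower bound, I would use the Bonferroni-style inclusion-exclusion bound $P \ge c/r - \binom{c}{2}/r^2 = (c/r)\bigl(1 - 1/(2c^\ell)\bigr)$, which is legitimate because the $c$ draws are independent. A short algebraic check (multiplying through by $c^\ell(c-1)$) shows that this quantity is at least $1/c^\ell$ precisely when $c^\ell \ge c/2$, which holds automatically for $\ell \ge 1$ and $c \ge 2$. Since $\tau > c^\ell$ strictly, we conclude $P \ge 1/c^\ell > 1/\tau$, finishing the lower bound.

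The main obstacle, such as it is, is the tightness of the lower bound: the second-order Bonferroni correction shaves off a factor $(1 - 1/(2c^\ell))$ from the first-order term $cp = c/(c^\ell(c-1))$, and the remaining quantity only barely exceeds $1/c^\ell$. The argument therefore depends crucially on the fact that $\ell \ge 1$ (equivalently $\tau > c$), which is the only regime in which the $c$-backoff protocol even has nonzero probability of broadcasting at time $t + \tau$. Everything else is routine algebra.
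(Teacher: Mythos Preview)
Your proof is correct and follows the same line as the paper's: locate the range $R_\ell$ containing $t+\tau$ and bound the broadcast probability via $c/|R_\ell|$. The paper in fact asserts that the probability \emph{equals} $c/(c^{k}-c^{k-1})$ (with $k=\lceil\log_c\tau\rceil=\ell+1$) and then bounds that expression directly; since the $c$ draws are with replacement, that quantity is really the expected number of hits rather than the exact probability $1-(1-1/r)^c$, so your Bonferroni correction for the lower bound is, if anything, more careful than the paper's own argument.
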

\begin{proof}

  Let $k=\lceil \log_c \tau\rceil$. Then at time $\tau$ the probability of a broadcast is $\frac{c}{c^{k}-c^{k-1}} = \frac{1}{c^{k-1}-c^{k-2}}$. For the lower bound, $\frac{1}{c^{k-1}-c^{k-2}} \ge \frac{1}{c^{k-1}} \ge \frac{1}{\tau}.$ For the upper bound, notice that $c^{k-2} \le c^{k-1}/2$, and so  $\frac{1}{c^{k-1}-c^{k-2}} \le \frac{2}{c^{k-1}} = \frac{2c}{c^{k}} \le \frac{2c}{\tau}.$
\end{proof}


\subsection{The  Batch and Jamming Protocols}

We now introduce the batch protocol and the jamming protocol used by players in phase 3.

The goal of the batch protocol is to utilize synchronization between a
batch of players (all starting at the same time), in order to enable a
constant fraction of those players to succeed in a reasonable amount
of time. The jamming protocol is simultaneously used on the other
channel in order to prevent activity on that channel.

\begin{definition}\label{def:batch_protocol}
A player performs a \defn{batch protocol} starting at time-step $t$ as follows: during the $i$-th step of the protocol, the player broadcasts with probability $\frac{1}{i}$.
\end{definition}

The goal of the jamming protocol is to prevent unsynchronized players in the system from disturbing an execution of a (synchronized) batched protocol. 

\begin{definition}\label{def:jam_protocol}
A player performs a \defn{$c_2$-jamming protocol} starting at time-step $t$ as follows: during the $i$-th step of the protocol, the player broadcasts with probability $\frac{c_2\log i}{i}$.
\end{definition}

\paragraph{Balanced executions.} An important property of both the batch protocol and the jamming protocol is that both protocols are balanced in the sense that the contribution of each protocol to the contention is well controlled. The following definition of a balanced protocol captures this notion formally in a way that will be useful in our analysis; specifically we will show that balanced protocols interact with $c$-backoff protocols in a constructive manner when attempting to achieve a first success.

\begin{definition}\label{dfn:balance}
 Let $d>1$ be some constant.
An execution of a protocol $\mathcal{P}$ starting at time $0$ is said to be \defn{$(d,\tau)$- balanced} if the following conditions hold:

\begin{itemize}
\item \textbf{(Monotone size requirement)} If at step $s$, there are $m_s$ players executing $\mathcal{P}$ in the system, then $m_0, m_1, m_2, \ldots$ is a (weakly) monotonically decreasing sequence.
\item \textbf{(Monotone probability requirement)} If at step $s$, each player that is executing $\mathcal P$  broadcasts with probability $q_s$,
then $q_0, q_1, q_2, \ldots$ is a  (weakly) monotonically decreasing sequence.
      \item \textbf{($\tau$-lower bound requirement)} for $s > \tau$, $m_s  q_s \le \frac{\log \tau}{d}$.
    \item \textbf{($\tau$-upper bound requirement)} for $s \le d^6\tau$, $m_s  q_{s} \ge \frac{d\log \tau}{\tau}$.
\end{itemize}

\end{definition}


\subsection{The Main Protocol}

We are now prepared to present the algorithm in detail. Upon arrival, each player enters the following three phases, continuing until the player succeeds:
\begin{enumerate}
   \setlength\itemsep{.1em}
\item \textbf{Channel-choosing phase: } Execute $c$-backoff on an arbitrary $\alpha_i$ until witnessing a success on some channel $\alpha$. 
\item \textbf{Batch-synchronization phase: } Execute $c$-backoff on channel $\overline{\alpha}$ until witnessing a success on channel $\overline{\alpha}$.
\item \textbf{Batch-execution phase: }
Execute a batch protocol on channel $\overline{\alpha}$ and a jamming protocol on channel $\alpha$, until a success occurs on channel $\alpha$, in which case restart the Batch-execution phase in channel $\alpha$ (while jamming channel $\overline \alpha$). 
\end{enumerate}

\paragraph{Defining the start and end points of batch protocols.} The notion of when a batch protocol terminates is, at least initially, ambiguous. For example, one might assume that a  batch protocol terminates after all of the players of the batch have succeeded, or alternatively assume that a batch terminates when a success is heard on the other channel. 
In order to avoid these types of ambiguity  we allow  a batch operation on a channel $\alpha$ to extend beyond the successes of all its participants, as long as there are still other players in the system and no successes have yet occurred on $\overline{\alpha}$.

We also need a clear definition for the beginning and end of a batch operation. We consider the success marking the beginning of a batch operation not to be part of the batch operation, and the success marking the end of a batch operation (if there is such a success) to be part of the batch operation.

\paragraph{Basic properties.}
The algorithm has several useful properties, each of which can be proven by induction:
\begin{enumerate}
   \setlength\itemsep{.1em}
\item \textbf{Property 1.} There can be at most one batch operation taking place at a time. 
\item \textbf{Property 2.} Any player broadcasting on a channel $\overline{\alpha}$ in which a batch operation is occurring either (a) is participating in that batch operation; or (b) arrived during the batch operation, is engaged in Phase 1 (Signal Generation), and will vacate the channel $\overline{\alpha}$ upon seeing any successes. 
\item \textbf{Property 3.} During a batch operation in $\alpha$, every player not in the batch operation either arrived after the batch operation began, or began Phase 2 (Synchronization) in channel $\overline{\alpha}$ when the batch operation began. 
\end{enumerate}

When analyzing the algorithm, we will often assume Properties 1 and 2
implicitly (since they are used quite heavily). When using Property
3, we reference it directly.




\section{Preliminaries}
\label{sec:prelims}

Throughout the paper, we say that an event occurs with \defn{high probability} in $n$ if the probability of the event occurring is $1 -
\frac{1}{\Omega(n^c)}$ for a constant $c$ of our choice (depending on the
constants used to define the event).

An essential ingredient several of our proofs will be the use of McDiarmid's Inequality. \begin{theorem}[McDiarmid's Inequality \cite{McDiarmid89}]
Let $X_1, \ldots, X_m$ be independent random variables over an arbitrary probability space. Let $F$ be a function mapping $X_1, \ldots, X_m$ to $\mathbb{R}$, and suppose $F$ satisfies,
\begin{multline*}
  \sup_{x_1, x_2, \ldots, x_n, \overline{x}_i} |F(x_1, x_2, \ldots, x_{i - 1}, x_i, x_{i + 1}, \ldots , x_n) \\
  - F(x_1, x_2, \ldots, x_{i - 1}, \overline{x}_i, x_{i + 1}, \ldots , x_n)| \le c,
\end{multline*}
for some $c>0$ and for all $1 \le i \le n$. That is, if $X_1, X_2, \ldots, X_{i - 1}, X_{i + 1}, \ldots, X_n$ are fixed, then the value of $X_i$ can affect the value of $F(X_1, \ldots, X_n)$ by at most $c$. Then for all $R > 0$,
$$\Pr[F(X_1, \ldots, X_n) - \E[F(X_1, \ldots, X_n)] \ge R] \le e^{-2R^2 / (c^2n)},$$
and
$$\Pr[F(X_1, \ldots, X_n) - \E[F(X_1, \ldots, X_n)] \le -R] \le e^{-2R^2 / (c^2n)}.$$
\label{thm:mcdiarmid}
\end{theorem}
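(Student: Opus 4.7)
The plan is to prove this as a corollary of the Azuma--Hoeffding inequality applied to the Doob martingale associated with $F$. First I would define, for $i = 0, 1, \ldots, n$, the random variable
\[
Y_i = \E[F(X_1, \ldots, X_n) \mid X_1, \ldots, X_i],
\]
with the convention that $Y_0 = \E[F(X_1, \ldots, X_n)]$ (an unconditional expectation) and $Y_n = F(X_1, \ldots, X_n)$. By the tower property of conditional expectation, the sequence $Y_0, Y_1, \ldots, Y_n$ is a martingale with respect to the filtration generated by $X_1, \ldots, X_n$, so the increments $D_i = Y_i - Y_{i-1}$ form a martingale difference sequence, and the target deviation $F(X_1, \ldots, X_n) - \E[F(X_1, \ldots, X_n)]$ equals $\sum_{i=1}^n D_i$.

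The central step is to show that $|D_i| \le c$ almost surely, using the bounded-differences hypothesis on $F$. Fix a value of $(X_1, \ldots, X_{i-1}) = (x_1, \ldots, x_{i-1})$ and define
\[
g(y) = \E\bigl[F(x_1, \ldots, x_{i-1}, y, X_{i+1}, \ldots, X_n)\bigr],
\]
where the expectation is over the (independent) variables $X_{i+1}, \ldots, X_n$. Then, conditioned on $(x_1,\ldots,x_{i-1})$,
\[
D_i = g(X_i) - \E_{X_i'}[g(X_i')].
\]
By the hypothesis, $|g(y) - g(y')| \le c$ for all $y, y'$ (since the supremum inside the expectation defining $g$ is bounded by $c$), so $g$ takes values in some interval of length at most $c$, and hence $|D_i| \le c$ almost surely.

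Having established $|D_i|\le c$, I would then invoke the Azuma--Hoeffding inequality for martingales with bounded differences, which yields
\[
\Pr\!\left[\sum_{i=1}^n D_i \ge R\right] \le \exp\!\left(-\frac{2R^2}{\sum_{i=1}^n c^2}\right) = \exp\!\left(-\frac{2R^2}{c^2 n}\right),
\]
which is the desired upper-tail bound; the lower-tail bound follows symmetrically by applying the same argument to $-F$. The main subtlety in the whole plan is the bound on $|D_i|$: one must be careful that the bounded-differences condition, which is a worst-case statement about swapping one coordinate, really does translate into a pointwise bound on $g(y)-g(y')$ through a marginal expectation over the remaining coordinates. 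This step crucially uses the independence of $X_1, \ldots, X_n$ (so that conditioning on $X_1,\ldots,X_{i-1}$ does not alter the distribution of $X_{i+1},\ldots,X_n$), and is the only place in the argument where independence is invoked. Everything else is a mechanical application of the Doob martingale construction and Azuma--Hoeffding.
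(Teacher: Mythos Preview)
The paper does not prove this theorem; it is stated with a citation to \cite{McDiarmid89} and used as a black-box tool. Your Doob-martingale-plus-Azuma--Hoeffding argument is the standard and correct proof.

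One small point worth tightening: to recover the constant $2R^2/(c^2 n)$ rather than the weaker $R^2/(2c^2 n)$, you need the version of Hoeffding's lemma in which each increment $D_i$, conditioned on the past, lies in an interval of \emph{length} $c$ (so that $\E[e^{\lambda D_i}\mid X_1,\ldots,X_{i-1}]\le e^{\lambda^2 c^2/8}$), not merely $|D_i|\le c$. You already establish exactly this---$g$ has range contained in an interval of length $c$---so you should invoke that sharper form directly rather than passing through the coarser bound $|D_i|\le c$.
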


We will also make extensive use of the following lemma, which is a
consequence of the Azuma-Hoeffding inequality for super-martingales
with bounded differences. The proof appears in
\ifarxiv Appendix~\ref{app:missing_proof}.
\fi
\ifconf
the extended paper \cite{arxiv}.
\fi

\begin{lemma}
  Suppose $X_1, \ldots, X_n$ are (dependent) random variables such that
  $$\E[X_i \mid X_1 = a_1, X_2 = a_2, \ldots, X_{i - 1} = a_{i - 1}]
  \le O(1),$$ for all values $a_1, \ldots, a_{i - 1}$ of $X_1, \ldots,
  X_{i - 1}$, and for all $i$. Moreover, suppose that deterministically $|X_i| \le
  O(n^{0.1})$ for each $i$. Then with probability $1 - n^{-\omega(1)}$,
  $$\sum_i X_i \le O(n).$$
  \label{lem:azuma}
\end{lemma}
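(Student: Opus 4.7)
The plan is to reduce the lemma directly to Azuma--Hoeffding by centering the $X_i$'s around their conditional means. Define
\[
D_i = X_i - \E[X_i \mid X_1, \ldots, X_{i-1}], \qquad S_k = \sum_{i=1}^{k} D_i.
\]
Then $\{S_k\}$ is a martingale with respect to the natural filtration generated by the $X_i$'s, since $\E[D_i \mid X_1, \ldots, X_{i-1}] = 0$ by construction. The hypotheses give two useful deterministic facts: first, each conditional mean $\E[X_i \mid X_1,\ldots,X_{i-1}]$ is bounded above by some constant $C$, so $\sum_{i=1}^{n} \E[X_i \mid X_1,\ldots,X_{i-1}] \le Cn = O(n)$; second, because $|X_i| \le O(n^{0.1})$ always, we also have $|\E[X_i \mid X_1,\ldots,X_{i-1}]| \le O(n^{0.1})$, so the martingale differences satisfy $|D_i| \le O(n^{0.1})$ almost surely.

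Next I would apply the Azuma--Hoeffding inequality to the martingale $S_k$ with bounded differences $|D_i| \le K n^{0.1}$ for some constant $K$. This yields
\[
\Pr[S_n \ge n] \le \exp\!\left(-\frac{n^2}{2 \sum_{i=1}^{n} (Kn^{0.1})^2}\right) = \exp\!\left(-\Omega(n^{0.8})\right),
\]
which is $n^{-\omega(1)}$, as desired. Combining this tail bound with the deterministic upper bound on the sum of conditional means gives
\[
\sum_{i=1}^{n} X_i \;=\; S_n + \sum_{i=1}^{n} \E[X_i \mid X_1,\ldots,X_{i-1}] \;\le\; n + Cn \;=\; O(n),
\]
with probability $1 - n^{-\omega(1)}$, completing the proof.

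There is essentially no hard step here; the only thing one must be slightly careful about is that the hypothesis $\E[X_i \mid \cdots] \le O(1)$ is only a one-sided bound, so one cannot center by the conditional means and simultaneously get a two-sided control on $\sum_i \E[X_i \mid \cdots]$. This is harmless for the stated one-sided conclusion $\sum_i X_i \le O(n)$: the centered sum $\sum_i \E[X_i \mid \cdots]$ is bounded above by $O(n)$ deterministically, and its possible (very) negative values only help. The uniform $|X_i| \le O(n^{0.1})$ bound is what keeps the Azuma variance term at $O(n^{1.2})$, which is precisely what makes the tail $\exp(-\Omega(n^{0.8}))$ superpolynomially small.
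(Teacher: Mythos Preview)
Your proof is correct and essentially the same as the paper's: both reduce directly to Azuma--Hoeffding with bounded differences $O(n^{0.1})$ to obtain an $\exp(-\Omega(n^{0.8}))$ tail. The only cosmetic difference is that the paper forms the super-martingale $Y_i = X_1+\cdots+X_i + (n-i)c$ and applies the super-martingale version of Azuma--Hoeffding, whereas you subtract the conditional means to get a true martingale and then add back the (deterministically $O(n)$) sum of conditional means; these are equivalent reformulations.
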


Finally, it will be useful to have the following lemma which examines
the probability that the sum of independent zero-one random variables
takes value either $0$ or $1$.
The proof appears in
\ifarxiv Appendix~\ref{app:missing_proof}.
\fi
\ifconf
the extended paper \cite{arxiv}.
\fi

\begin{lemma}
  Let $X = X_1 + \cdots + X_t$ be the sum of $t$ independent $0$-$1$
  random variables. Suppose each $X_i$ takes value $1$ with
  probability $p_i \le 1/2$. Then
  $$\Pr[X = 1] \ge \Omega\left(\min(\E[X],  1/2^{2\E[X]})\right) \text{ }\text{ and } \text{ }\Pr[X = 0] \ge 1/2^{2\E[X]}.$$
  \label{lem:sumind}
\end{lemma}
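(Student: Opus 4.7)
\medskip

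\noindent\textbf{Proof Proposal.} The plan is to handle the two bounds separately and reduce both to the standard estimate $1 - p \ge e^{-2p}$ for $p \in [0,1/2]$.

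First I would dispatch the bound on $\Pr[X = 0]$. Since the $X_i$ are independent, $\Pr[X=0] = \prod_{i=1}^t (1 - p_i)$. Because each $p_i \le 1/2$, the elementary inequality $1 - p \ge e^{-2p}$ holds termwise (verified by comparing values at $p=0,1/2$ and checking derivatives), so $\Pr[X=0] \ge e^{-2\sum_i p_i} = e^{-2\E[X]} \ge 2^{-2\E[X]}$, using $e > 2$.

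Next I would handle $\Pr[X=1]$ by writing it as $\sum_i p_i \prod_{j \ne i}(1 - p_j)$, factoring out the full product and observing that $\frac{p_i}{1 - p_i} \ge p_i$. This gives
\[
\Pr[X=1] \;=\; \Bigl(\prod_j (1 - p_j)\Bigr) \sum_i \frac{p_i}{1 - p_i} \;\ge\; \Pr[X=0] \cdot \E[X] \;\ge\; \E[X] \cdot 2^{-2\E[X]}.
\]
The one piece of care required is producing the $\min(\E[X], 2^{-2\E[X]})$ form cleanly. I would split on a threshold such as $\E[X] \le 1$: in that regime $2^{-2\E[X]}$ is bounded below by a constant, so $\Pr[X=1] \ge \Omega(\E[X])$; in the complementary regime $\E[X] \ge 1$, the factor $\E[X]$ only helps, so $\Pr[X=1] \ge \Omega(2^{-2\E[X]})$. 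Combining the two regimes yields the claimed $\Omega(\min(\E[X], 2^{-2\E[X]}))$ bound.

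There is no real obstacle; the only minor subtlety is verifying the inequality $1 - p \ge e^{-2p}$ on $[0, 1/2]$ (as opposed to the looser $1 - p \ge e^{-p}$ that holds only for small $p$), and then being careful to convert $e^{-2\E[X]}$ into $2^{-2\E[X]}$ without losing a constant, which follows immediately from $e > 2$.
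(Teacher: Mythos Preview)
Your proof is correct, and the overall structure mirrors the paper's: both expand $\Pr[X=1]=\sum_i p_i\prod_{j\ne i}(1-p_j)$, pull out the full product $\prod_j(1-p_j)$, and then lower-bound that product by $2^{-2\E[X]}$. The one genuine difference is how that product bound is obtained. The paper argues extremally: for fixed $\sum_j p_j$, the product $\prod_j(1-p_j)$ only decreases when two $p_j$'s are pushed apart, so the minimum is attained by setting $\lfloor 2\sum_j p_j\rfloor$ of them to $1/2$ and the rest to zero, which yields $2^{-2\E[X]}$ directly. You instead use the termwise analytic inequality $1-p\ge e^{-2p}$ on $[0,1/2]$ and then $e>2$. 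Both are equally elementary; your route avoids the convexity/rearrangement reasoning, while the paper's route makes the extremal configuration (all $p_j\in\{0,1/2\}$) explicit. The case splits also differ cosmetically---the paper splits at $\E[X]=1/2$ and uses the Weierstrass bound $\prod_j(1-p_j)\ge 1-\sum_j p_j\ge 1/2$ in the small regime, whereas you split at $\E[X]=1$ and use the $2^{-2\E[X]}$ bound throughout---but the substance is the same.
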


\section{Analyzing $\mathbold{c}$-backoff}\label{sec:one_success}

Suppose that a $c$-backoff protocol is the only protocol being executed in a system, and that the first player arrives at time $0$.
In this section it is proven that if $n$ players arrive by step $\tau$ for a carefully chosen $\tau$,
then there are at least $\sqrt c$ successes by time $\tau$ (with high probability), and more specifically, during a carefully chosen interval of steps that ends at step $\tau$. 
The reason for proving that there are at least $\sqrt c$ successes (as opposed to just $1$) is that in later analysis this will ensure there is still at least one success even when $c$-backoff is being executed in congruence with other protocols during the same steps; see \lemref{firstsuccess}. Whereas in general we will allow the arrival-times of players to be determined by an adaptive adversary, in this section we consider only oblivious adversaries; this is allowable because all applications of the lemma will be concerned with generating only a single success (and prior to the first success, adaptive and oblivious adversaries are indistinguishable).

We begin by focusing on a sparse case in which the number of players that have entered the system by time $\tau$ is polynomially smaller than $\tau$. 
This case is formally stated  in \lemref{first_success_sparse}.
We then focus on a more general case
where the number of players that have entered the system by time $\tau$ is  $O(\tau)$.
This case is formally stated in \lemref{first_success_linear}.
Notice that the case in which  the number of players in the system by time $\tau$ is $\Omega(\tau)$ with a sufficiently large constant will not concern us.


\begin{lemma}
\lemlabel{first_success_sparse}
Let $0\le \epsilon \le 1/2$ be a constant and let $c\in \mathbb{N}$ be a sufficiently large constant  (such that $\sqrt c \gg {1}/ \epsilon$).
Suppose $n$ players performing $c$-backoff arrive into the system by time $\tau=c^{k+1}$ for some integer $k\ge 2$, where the first player arrives at time $0$ and the arrival times of the other players are determined by an oblivious adversary.
If $n=O(\tau^{1-\epsilon})$ then with high probability in $\tau$ there are at least $\sqrt c$ successes in the range $(\tau/c,\tau]$.
\end{lemma}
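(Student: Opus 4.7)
The plan is to show that, with high probability in $\tau$, all $c$ of the first player's broadcasts in the range $R_k := (c^k, c^{k+1}] = (\tau/c, \tau]$ are successful; since $c \ge \sqrt{c}$, this immediately gives the conclusion. By \defref{c_backoff}, and because player 1 arrives at time $0$, player 1's broadcasts in $R_k$ are obtained by drawing $c$ independent uniform samples (with replacement) from $R_k$ and broadcasting at the resulting distinct slots. A union bound over the $\binom{c}{2}$ pairs of samples shows that all $c$ samples are in fact distinct with probability at least $1 - O(c^2/|R_k|) = 1 - O(c^3/\tau)$.

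Next I would bound the expected number of player-1 broadcast slots that also contain a broadcast from some other player. Write $p_i(s) := \Pr[\text{player } i \text{ broadcasts at step } s]$. For each $i \ge 2$ with arrival time $t_i$, \lemref{prob_of_single_broadcast} gives $p_i(s) \le 2c/(s - t_i)$ for $s \ge t_i + c$ (and $p_i(s) = 0$ otherwise), so a harmonic-sum estimate yields $\sum_{s \in R_k} p_i(s) = O(c \log \tau)$, and summing over the $n - 1$ other players gives $\sum_{i\ge 2}\sum_{s \in R_k} p_i(s) = O(nc \log \tau)$. Let $F$ count the slots $s \in R_k$ where player 1 broadcasts and at least one other player also broadcasts. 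Because the adversary is oblivious, the arrival times are independent of all players' random choices, and player 1's choices are independent of the other players'. Using $p_1(s) \le c/|R_k| = O(c/\tau)$ uniformly on $R_k$, we get
\[
\E[F] \;\le\; \sum_{s \in R_k} p_1(s)\, \sum_{i \ge 2} p_i(s) \;\le\; \frac{c}{|R_k|} \sum_{i \ge 2}\sum_{s \in R_k} p_i(s) \;=\; O\!\left(\frac{c^2 n \log \tau}{\tau}\right),
\]
which is $1/\tau^{\Omega(1)}$ under the hypothesis $n = O(\tau^{1-\epsilon})$. Markov's inequality then gives $\Pr[F \ge 1] = 1/\tau^{\Omega(1)}$, and combining with the distinctness bound shows that with high probability in $\tau$ all $c$ of player 1's broadcasts land in distinct, uncontested slots of $R_k$. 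Hence at least $c \ge \sqrt{c}$ of them succeed.

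I expect the main delicate step to be the per-player contention bound. A player $i$ arriving just before $R_k$ (say at $t_i = c^k - 1$) has $\Theta(\log \tau)$ of its exponentially-growing backoff ranges $R_\ell^{(i)}$ sitting entirely inside $R_k$, each carrying $c$ broadcasts, so any naive per-range accounting would give $\Theta(c \log \tau)$ per such player. The per-step bound of \lemref{prob_of_single_broadcast} sidesteps this by treating all arrival-time configurations uniformly: the harmonic sum $\sum_{s \in R_k} 1/(s - t_i)$ is $O(\log \tau)$ regardless of $t_i$, yielding an $O(c\log\tau)$ per-player factor that is then absorbed by the $\tau^{-\epsilon}$ savings coming from $n = O(\tau^{1-\epsilon})$. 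Every other step is a short calculation or a union bound.
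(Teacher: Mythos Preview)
Your overall strategy---focus on the first player's $c$ broadcasts in $R_k$ and bound the interference from the other $n-1$ players via the total number of broadcasts they make---is exactly the paper's. The difference is in the last step, and there you lose something the paper needs.

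Your Markov bound gives $\Pr[F\ge 1]\le \E[F]=\tilde O(c^2/\tau^{\epsilon})$, so the failure probability is only $1/\tau^{\Theta(\epsilon)}$ with $\epsilon\le 1/2$. This exponent is \emph{fixed} and does not improve as $c$ grows. The paper's convention for ``with high probability in $\tau$'' is that the exponent can be pushed arbitrarily high by taking the protocol constant $c$ large (the paper's proof ends with ``for a large enough choice of $c$, \ldots with high probability in $\tau$''), and downstream uses of this lemma (e.g., in the $1/\operatorname{poly}(t)$ bounds of Lemmas~\ref{lem:truncatedbatchlengths} and~\ref{lem:truncatednonbatchlengths} that are then fed into Lemma~\ref{lem:azuma}) rely on being able to choose that exponent. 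So your first-moment bound, while correct, does not deliver the claimed strength.

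The paper gets the amplification by using the independence of player~1's $c$ random samples rather than Markov. Condition on all other players' choices; there are at most $O(cn\log_c\tau)$ ``bad'' slots in $R_k$, so each sample of player~1 lands in a bad slot with probability $\tilde O(\tau^{-\epsilon})$. Partition the $c$ samples into $\sqrt c$ groups of size $\sqrt c$; the probability that \emph{every} sample in a given group is bad is $\tilde O(\tau^{-\epsilon\sqrt c})$, and a union bound over the $\sqrt c$ groups gives at least $\sqrt c$ successes with failure probability $\tilde O(\sqrt c\,\tau^{-\epsilon\sqrt c})$. That exponent $\epsilon\sqrt c$ is the whole point of the hypothesis $\sqrt c\gg 1/\epsilon$. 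Your setup already has all the ingredients (the bad-slot count $O(cn\log_c\tau)$ is exactly what the paper uses); you just need to replace the Markov step with this independence-and-partition argument.
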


\begin{proof}
Let $p$ be the first player, and so $p$ arrives at time $0$.
By the definition of the protocol, at time $\tau/c +1$ player $p$ enters the $k$-th integer range $R_k=(c^k, c^{k+1}] = (\tau/c, \tau]$ of the $c$-backoff protocol.
In order to prove the lemma, we will prove that $p$ has at least $\sqrt{c}$ successes within the range $R_k$, with high probability in $\tau$.

Player $p$ selects a set $B_k$ of $c$ steps from $R_k$ during which to broadcast.  (In fact, $B_k$ is a multiset, because the steps are chosen with replacement.)
Let $S_1, \ldots, S_{\sqrt{c}}$ be a partition of $B_k$ such that for $1\leq i\leq \sqrt c$,  $|S_i| = \sqrt c$.

For a given $S_i$ in the partition, a step $s\in R_k$ is said to be \defn{bad} if either $s\in B_k\setminus S_i$ or if one of the other $n - 1$ players (excluding $p$) chooses to broadcast during step $s$.
The number of bad steps due to $B_k\setminus S_i$ is at most $c-\sqrt c$. By Lemma~\ref{lem:single_player_num_of_broadcasts}, the number of bad steps due to any one of the $n-1$ other players is $O(c\log_c\tau)$, and so the number of bad steps due to all of the $n-1$  other players  is at most $O(cn\log_c \tau)$. Thus, the total number of bad steps, for any arbitrary choices of $B_k\setminus S_i$ and of the broadcast-times made by the $n-1$  other players, is at most $O(cn\log_c \tau)$.

Notice that $|R_k| = \Theta(\tau)$.
Since each element $s\in S_i$ is a random step in $R_k$, the probability that $s$ is a bad step is $$O\left(\frac{cn\log_c \tau}{|R_k|}\right) = \tilde{O}\left(\frac{\tau^{1-\epsilon}}{\tau}\right) = \tilde{O}\left(\frac{1}{\tau^{\epsilon}}\right).
$$

Let $A_i$ be the event that no broadcast in $S_i$ was successful, which happens if and only if all of the steps in $S_i$ are bad.
Since the elements in $S_i$ are independent,  $\Pr[A_i] = \tilde{O}\left(\frac{1}{\tau^{\epsilon \sqrt c}}\right)$.
By the union bound, the probability that in each $S_i$ at least one broadcast was successful is
$1- \tilde{O}\left(\frac{\sqrt c}{\tau^{\epsilon \sqrt c}}\right). $
For a large enough choice of $c$, the number of successes is at least $\sqrt c$ with  high probability in $\tau$.
\end{proof}

\begin{lemma}\lemlabel{first_success_linear}
Let $c\in \mathbb{N}$ be a sufficiently large square constant.
Suppose $n$ players performing $c$-backoff arrive into the system by time $\tau=c^{k+2}$ for some integer $k\ge 2$, where the first player arrives at time $0$ and the arrival times of the other players is determined by an oblivious adversary.
If $n\le c^k$, then with high probability in $\tau$ there are at least $\sqrt c$ successful transmissions in the range $(\tau/c^2,\tau]$.
\end{lemma}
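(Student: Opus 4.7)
The plan is to mirror the proof of Lemma \lemref{first_success_sparse}, but to focus on player $p$'s broadcasts in the wider range $R_{k+1}^{(p)} = (c^{k+1}, c^{k+2}] \subseteq (\tau/c^2, \tau]$ rather than in $R_k^{(p)}$. The range $R_{k+1}^{(p)}$ is a factor of $c$ larger than $R_k^{(p)}$, and this extra slack is exactly what compensates for the denser regime allowed here ($n \le c^k$ instead of $n = O(\tau^{1-\epsilon})$).

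By the definition of $c$-backoff, since $p$ arrives at time $0$, it selects a broadcast set $B_{k+1}$ of $c$ random (not necessarily distinct) time steps in $R_{k+1}^{(p)}$ during its $(k+1)$-st backoff range. Following the sparse lemma, I partition $B_{k+1}$ into $\sqrt c$ subsets $S_1, \ldots, S_{\sqrt c}$ of size $\sqrt c$ each, and aim to show that with the claimed probability every $S_i$ contains at least one successful broadcast, yielding $\sqrt c$ total successes.

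The critical step is to bound the number of \emph{bad} steps in $R_{k+1}^{(p)}$, where a step $s$ is bad (with respect to $S_i$) if either $s \in B_{k+1} \setminus S_i$ or some other player broadcasts at $s$. The contribution from $B_{k+1} \setminus S_i$ is at most $c - \sqrt c$. The main obstacle lies in bounding the bad steps contributed by the other $n-1$ players: the loose bound $O(cn \log_c \tau)$ used in \lemref{first_success_sparse} counts \emph{all} broadcasts over the entire protocol history and is too weak here (it would exceed $|R_{k+1}^{(p)}|$ for $n = c^k$). Instead, I will exploit the geometric structure of the backoff ranges: because the ranges $R_\ell^{(q)}$ of any fixed player $q$ grow by factors of $c$, only $O(1)$ of them overlap $R_{k+1}^{(p)}$, and each such range contains only $c$ broadcasts. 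So each other player contributes at most $O(c)$ broadcasts to $R_{k+1}^{(p)}$, for a total of $O(nc) = O(c^{k+1})$ bad steps across the $n - 1$ other players.

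Since $|R_{k+1}^{(p)}| = \Theta(c^{k+2})$, a single random element of $S_i$ is bad with probability $O(c^{k+1}/c^{k+2}) = O(1/c)$. As the $\sqrt c$ elements of $S_i$ are chosen independently, the probability that they are all bad is $(O(1/c))^{\sqrt c} = 1/c^{\Omega(\sqrt c)}$. Taking a union bound over the $\sqrt c$ partition cells yields a failure probability of $\sqrt c / c^{\Omega(\sqrt c)} = 1/c^{\Omega(\sqrt c)}$, so for $c$ chosen sufficiently large (as permitted by the lemma statement) this gives the claimed high-probability guarantee of at least $\sqrt c$ successful transmissions in the interval $(\tau/c^2, \tau]$. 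The main obstacle throughout is the refined bad-step count via the geometric-ranges observation; once that is in hand, the partition-and-union-bound argument from the sparse lemma carries over essentially unchanged.
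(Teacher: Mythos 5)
Your proposal has a fundamental flaw: the failure probability it yields is a constant, not high probability in $\tau$. Recall that ``high probability in $\tau$'' in this paper means failure probability $1/\poly(\tau)$, i.e.\ polynomially small in $\tau = c^{k+2}$, for any fixed choice of the constant $c$. Your argument bounds the bad-step probability by $O(1/c)$ and concludes that the failure probability is $\sqrt{c}/c^{\Omega(\sqrt c)}$. This is a fixed constant that does not shrink as $k$ (equivalently $\tau$) grows. Taking $c$ ``sufficiently large'' can make this constant small, but can never make it $1/\poly(\tau)$. Contrast with the sparse lemma (\lemref{first_success_sparse}), where the bad-step probability is $\tilde O(n/\tau) = \tilde O(1/\tau^{\epsilon})$; raising \emph{that} to the $\sqrt c$ power gives $\tilde O(1/\tau^{\epsilon\sqrt c})$, which genuinely is polynomially small in $\tau$. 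The single-player argument works there precisely because $n$ is polynomially smaller than $\tau$; in the dense regime $n \le c^k = \tau/c^2$ the bad-step density is $\Theta(1)$ and the approach gives nothing.

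There is also a concrete error in your refined broadcast count. You claim that only $O(1)$ of another player $q$'s backoff ranges can intersect $R_{k+1}^{(p)} = (c^{k+1}, c^{k+2}]$, hence at most $O(c)$ broadcasts by $q$ in that window. This is false: a player $q$ arriving at time $t_q \approx c^{k+1}$ has its ranges $R_\ell^{(q)} = (t_q + c^{\ell}, t_q + c^{\ell+1}]$ for all $\ell = 1, \ldots, k$ contained in $R_{k+1}^{(p)}$ (since $t_q + c^{k+1} \le 2c^{k+1} \le c^{k+2}$). That is $\Theta(\log_c \tau)$ ranges, so $q$ can contribute $\Theta(c \log_c \tau)$ broadcasts --- exactly the bound of \lemref{single_player_num_of_broadcasts}, and there is no slack to be found here. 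Even substituting the corrected $O(c\log_c\tau)$ per-player count leaves the bad-step probability at $O(k/c)$, which fails to shrink (and becomes vacuous) as $k$ grows, so the fix does not rescue the approach.

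The paper's proof of \lemref{first_success_linear} uses a genuinely different strategy for the dense case: rather than fixing a single player $p$ and asking whether one of $p$'s own broadcasts is clean, it counts successes globally. It reduces to \lemref{first_success_sparse} if few players arrived by time $\tau' = \tau/c$; otherwise it classifies each step in $(\tau', \tau]$ as light or heavy according to its contention $\E[b_s]$, uses the total-broadcast budget $cn\log_c\tau$ to show at least half the steps are light, uses the lower bound on the number of present players to show each light step succeeds with probability $\Omega(\tau^{-\epsilon})$ via \lemref{sumind}, concludes $\E[N] = \Omega(\tau^{1-\epsilon})$ for the number of successes $N$, and finally applies McDiarmid's inequality (each broadcast-time choice is $1$-Lipschitz for $N$) to concentrate $N$ around its polynomially large mean. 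It is this concentration step that delivers a failure probability that is actually polynomially small in $\tau$, which your single-player argument cannot reach.
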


\begin{proof}
Let $\tau' = \tau/c = c^{k+1}$, and fix a constant $0\le \epsilon \le 1/2$. There are two cases to consider depending on the number of players $n'$ that arrived by time $\tau'$.
If $n' = O((\tau')^{1-\epsilon})$ then, by \lemref{first_success_sparse}, for a large enough choice of $c$, there are $\sqrt c$ successes in the range $(\tau'/c,\tau'] \subset (\tau/c^2,\tau]$  with high probability in $\tau'$ (and thus in $\tau$).


Next, consider the case where $n' = \Omega((\tau')^{1-\epsilon})$.
For each step $s\in (\tau',\tau]$, let $b_s$ be the number of players that broadcast at step $s$. Thus, the contention at step $s$ is, by definition, $\E[b_s]$.
Step $s$ is said to be \defn{light} if $\E[b_s] \le \frac{1}{c}\log \tau$ and \defn{heavy} otherwise.
Since $n\leq c^k$, and since by Lemma~\ref{lem:single_player_num_of_broadcasts}, each one of the players broadcasts at most $c  \log_c \tau$ times by time $\tau$,
the number of broadcasts by all players within time range $(\tau',\tau]$ is at most $c n\log_c \tau$, and so,
$$\sum_{s = \tau' + 1}^{\tau} \E[b_s] = \E\left[\sum_{s = \tau' + 1}^{\tau} b_s\right] \le c n \log_c \tau.$$
Thus, the number of heavy steps $s\in (\tau',\tau]$ is at most $c^2 n/\log c$, which is at most a $\frac{c^2  n}{(\tau-\tau')\log c}$ fraction of the steps in $(\tau',\tau]$.
Since $\tau' =\tau/c \le \tau/2$ and since $c^2n\le c^{k+2} = \tau$, the fraction of light steps in $(\tau',\tau]$ is at least
$$ 1-\frac{c^2 n}{(\tau-\tau')\log c} \ge 1-\frac{2 c^2 n}{\tau \log c } \ge 1- \frac{2 }{\log c}.$$
Thus, if $c\geq 16$ then
at least half of the steps in $(\tau',\tau]$ are light.

On the other hand, since there are at least $n'=\Omega((\tau')^{1-\epsilon})$ players in the system at time $s\in (\tau',\tau]$ and, by Lemma~\ref{lem:prob_of_single_broadcast}, each one of these players broadcasts at step $s$ with probability of at least $\frac{1}{\tau-\tau'}$, then  $\E[b_s] \ge \frac{1}{\tau-\tau'} \Omega(\tau'^{1-\epsilon}) = \Omega(\tau^{-\epsilon}) $.
Therefore, for any light step $s\in (\tau',\tau]$ we have $$\Omega(\tau^{-\epsilon}) \le \E[b_s] \le \frac{1}{c}\log \tau.$$

Notice that $b_s$ depends on the random choices made by the players that are in the system at time $s$, and, in particular, $b_s$
is a sum of independent random 0-1 variables, each of which takes value $1$ with probability at most $ 1/(c-1) \le 1/2$ (assuming $c\ge 3$).
Thus, by \lemref{sumind},
the probability of a success at a light step $s$ is at least
$$\Omega\left(\min(\E[b_s], 2^{-2\E[b_s]})\right) = \Omega\left(\min(\tau^{-\epsilon}, \tau^{- 2 /c})\right).$$
By setting $c\ge \frac 2 \epsilon$, the probability becomes $\Omega(\frac{1}{\tau^{\epsilon}})$.

Let $N$ be the number of successes in $(\tau',\tau]$.
Since at least half of the steps in $(\tau',\tau]$ are light steps, $\E[N] = \Omega(\tau^{1-\epsilon})$.

In order to obtain a high probability bound on $N$ notice that $N$ is a function of $O(n \log \tau)$ independent random variables (specifically, these correspond with each of the time steps chosen by each player during which to broadcast). Moreover, changing a single variable can affect $N$ by at most~1.
It follows by McDiarmid's inequality (Theorem \ref{thm:mcdiarmid}), that with probability $1 - 1/\tau^{\omega(1)}$, $N$ does not deviate from $\E[N]$  by more than $\tau^{0.4} = o(\tau^{1-\epsilon})$ (since $\epsilon \le 1/2$).
Thus with (very) high probability in $\tau$, $N=\Omega(\tau^{1-\epsilon})$ which is much larger than $\sqrt c$, as required.
\end{proof}

\section{First Success in the Presence of Balanced Executions}\label{sec:one_success2}
Recall that the intuition described in the previous section for requiring at least $\sqrt c$ successes in \lemref{first_success_linear} was that if the $c$-backoff protocol is not the only protocol being executed in the system, then multiple successes may be necessary before one of them is able to avoid conflicting with other concurrent protocols.
\lemref{firstsuccess} generalizes \lemref{first_success_linear} to the more advanced setting in which there are also players in the system that are not performing $c$-backoff. 

\begin{lemma}
Let $c\in \mathbb{N}$ be a sufficiently large square constant.
Suppose $n$ \defn{primary players} performing $c$-backoff arrive into the system by time $\tau=c^{k+5}$ for some integer $k\ge 2$, where $n\le c^k$, and where the arrival times of the primary players are determined by an oblivious adversary. Suppose additionally that there are
  \defn{secondary players} in the system participating in a $(c, c^k)$-balanced execution of some protocol.
Then with high probability in $c^k$ there will be at least one success in the time interval $(c^k, c^{k+5}]$:


\label{lem:firstsuccess}
\end{lemma}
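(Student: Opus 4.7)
The plan is to adapt the proof of \lemref{first_success_linear}, viewing the secondary players' broadcasts as additional per-step contention. Let $\tau = c^{k+5}$ and focus on the sub-interval $I = (c^{k+4}, \tau]$; I will show that a success occurs in $I$ with high probability in $c^k$, which immediately implies a success in $(c^k, c^{k+5}]$.

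First I would bound the total expected contention at each step $s \in I$. By \lemref{single_player_num_of_broadcasts}, the total expected primary broadcasts across $I$ is at most $c n \log_c \tau \le c^{k+1}(k+5)$. By the $\tau$-lower bound requirement of the balanced protocol (with balance parameter $c^k$), the per-step secondary contention satisfies $m_s q_s \le \log(c^k)/c$ for all $s > c^k$, so the total expected broadcasts over $I$ is $O(|I| \cdot k \log c / c)$. A counting argument analogous to the one in the proof of \lemref{first_success_linear} then shows that at least half of the steps in $I$ are \emph{light}, meaning $\E[b_s] \le 2 \log \tau / c$. Moreover, since $I \subseteq (c^k, c^{k+6}]$, the $\tau$-upper bound requirement gives $m_s q_s \ge c k \log c / c^k$ for all $s \in I$, so even light steps have $\E[b_s] \ge c k \log c / c^k$.

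At each light step $s$, applying \lemref{sumind} (whose individual-probability hypothesis holds for primaries by \lemref{prob_of_single_broadcast}, and for secondaries when $c$ is sufficiently large) gives $\Pr[\text{success at }s] \ge \Omega(\min(\E[b_s], 2^{-2 \E[b_s]}))$. The map $\mu \mapsto \min(\mu, 2^{-2\mu})$ attains its minimum on the interval $[ck \log c/c^k,\, 2(k+5)\log c/c]$ at one of its endpoints, so this bound reduces to $\Omega(\min(ck \log c/c^k,\, c^{-4(k+5)/c}))$, which for $c$ sufficiently large simplifies to $\Omega(c k \log c / c^k)$. Summing over the $\Omega(c^{k+5})$ light steps yields $\E[N] \ge \Omega(c^5 k \log c)$, where $N$ counts successes in $I$. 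A suitable concentration inequality (for example, McDiarmid's as in the proof of \lemref{first_success_linear}, or Chernoff applied conditionally on the primary broadcast selections) then gives $\Pr[N = 0] \le c^{-\Omega(k)}$, with the implicit constant made as large as desired by taking $c$ larger.

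The main obstacle lies in the dual role of the secondary players: their contention provides a lower bound on per-step broadcast activity (helpful, since each light step thereby has nonzero success probability), but it can also cause collisions that spoil potential successes (harmful when $m_s q_s$ is large, as happens when $k \gg c$). Managing this tradeoff requires choosing $c$ large enough that the minimum of $\min(\mu, 2^{-2\mu})$ over the allowable range of $\mu = m_s q_s$, once multiplied by the $\Omega(c^{k+5})$ light steps, still yields an expected success count that translates via concentration into a high-probability-in-$c^k$ guarantee.
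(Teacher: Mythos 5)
Your proposal takes a genuinely different route from the paper: you try to bound the per-step success probability for primaries and secondaries \emph{combined} at each light step, and then sum and concentrate. The paper instead splits into three cases (no primaries in the first $c^{k+1}$ steps; primaries present and secondary contention $\le c^{-0.1k}$; primaries present and secondary contention $> c^{-0.1k}$) and in each case decouples the two populations so that one of them supplies the near-success slots and the other is controlled separately. The decoupling is not cosmetic; your unified approach has a real gap at the concentration step.

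Here is the problem concretely. Your per-step success lower bound at a light step is $\Omega(ck\log c/c^k)$, so summing over $\Theta(c^{k+5})$ light steps gives $\E[N] = \Theta(\text{poly}(c)\cdot k)$ --- polynomial in the constant $c$, but \emph{not} polynomial in $c^k$. However, $N$ is not a sum of independent per-step indicators, because the primary players' $c$-backoff choices are shared across steps. The number of independent random variables determining $N$ is at least $\Theta(n \log_c \tau \cdot c) = \Theta(c^{k+1} k)$ from the primaries alone, and changing any one of them can shift $N$ by $\pm 1$. So McDiarmid's bound $\exp(-2R^2/m)$ with $R \le \E[N] = O(c^6 k\log c)$ and $m = \Omega(c^{k+1} k)$ gives an exponent of order $c^{11-k} k\log^2 c$, which tends to $0$ as $k$ grows. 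The inequality is vacuous precisely in the regime where a ``high probability in $c^k$'' guarantee is demanded. The conditional-Chernoff variant you mention does not rescue this: conditioned on the primary choices, the per-step success events are indeed independent, but now you need $\E[N\mid\text{primaries}]$ to be large with high probability over the primary randomness, and that is the same concentration problem in disguise. (When $k$ is much larger than a power of $c$, one cannot even deterministically guarantee that a constant fraction of steps in $I$ are free of primary broadcasts, since the total number of primary broadcasts $\Theta(c^{k+1}k)$ can dwarf $|I| = \Theta(c^{k+5})$.)

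The paper's proof sidesteps this by never estimating the combined per-step success probability. In the high-secondary-contention case, it applies McDiarmid only to $N'$, the count of steps with \emph{no primary broadcast}, whose expectation is $\Omega(c^{0.6k})$ --- polynomially large in $c^k$ because the per-step bound comes from $\Pr[b_s^{\mathrm{prim}} = 0] \ge \Omega(2^{-2\E[b_s^{\mathrm{prim}}]}) = \Omega(c^{-0.4k})$, not from the tiny secondary term. That expectation survives McDiarmid ($\E[N']^2 / m = \Omega(c^{0.2k}/k)$), and then independence of the secondary decisions across those primary-free steps finishes the argument. In the low-secondary-contention case, Lemma~\ref{lem:first_success_linear} hands over $\sqrt{c}$ slots in which exactly one primary would broadcast, and a union bound over those $\sqrt{c}$ slots (each spoiled by secondaries with probability $\le c^{-0.1k}$) suffices --- no concentration inequality is used at all. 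If you want to keep your one-interval structure, you would need to reproduce some version of this case split; as written, the single counting argument cannot deliver an expected success count large enough for concentration to be meaningful over the full range of $k$.

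A smaller technical point: Lemma~\ref{lem:sumind} requires every individual broadcast probability to be at most $1/2$, and the definition of a $(c,c^k)$-balanced execution bounds only the aggregate $m_s q_s$, not $q_s$ itself, so the applicability of Lemma~\ref{lem:sumind} to the secondaries needs to be justified from the specific batch/jamming protocols, not from the balance conditions alone.
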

\begin{proof}
Throughout the proof, for a step $s$ let $m_s$ be the number of secondary players in the system at step $s$ and let $q_s$ be the probability of a secondary player broadcasting at step $s$.
Due to the monotone size and monotone probability requirements, the sequences $M=m_0,m_1,m_2\ldots $ and $Q=q_0, q_1, q_2, \ldots$ are (weakly) monotonically decreasing.

The proof begins by considering the case where there are only secondary players during the first $c^{k+1}$ steps.

\begin{claim}
  If no primary players arrive during the first $c^{k+1}$ steps then at least one of the steps in the interval  $(c^k, c^{k+1}]$ is a success of a secondary player, with high probability in $c^k$.
\end{claim}
\begin{proof}
  By the $c^k$-upper bound requirement, for all $s \le c^{k+1}$ we have that $m_s \cdot q_s \ge \frac{c \log c^k}{c^{k}}$.
  In this case, we show that with high probability in $c^k$ there is a success for some $s\in (c^k,c^{k+1}]$. Since $s>c^k$,  by the $c^k$-lower bound requirement we have  $m_s \cdot
  q_s \le \frac{\log c^k}{c}$.
  Since there are no primary players during $(c^k,c^{k+1}]$, only secondary players are broadcasting during $(c^k,c^{k+1}]$.
  Thus,  by Lemma \ref{lem:sumind}, for each  $s\in (c^k,c^{k+1}]$, the probability of a success at step $s$ is at least
  $$\Omega\left(\min\left(\frac{c \log c^k}{c^{k}}, \frac 1{2^{\frac{2\log c^k}{c}}}\right)\right),$$
  where the constant in the $\Omega$ is independent of $c$.  Since $c$
  is a sufficiently large constant the probability of a successful
  broadcast at each such step $s$ is
  $\Omega(\frac{\sqrt{c} \log c^k}{c^{k}})$ (where again the constant
  in the $\Omega$ is independent of $c$).  Since the steps in the
  interval $(c^k, c^{k+1}]$ are independent, the probability that at
  least one such step successfully broadcasting is at least with high
  probability in $c^k$, at least
  $$\left(1 - \Omega\left(\frac{\sqrt{c} \log c^k}{c^{k}}\right)\right)^{c^k / 2} \le \frac{1}{\poly c^k}.$$
\end{proof}

  It remains to consider the more general case in which at least one
  primary player arrives during the first $c^{k+1}$ steps. For simplicity, relabel the step at which the first primary player arrives as
  time $0$. The rest of the proof shows that with high probability in $c^k$ there is a success in relabeled interval $(c^k,c^{k+4}]$ (which in the original relabeling is a subinterval of interval $(c^k,c^{k+5}]$).

  The rest of the proof focuses on two cases, depending on whether the contention contributed by the secondary players at relabeled time $c^{k+2}$ is high (at least $\frac{1}{c^{0.1k}}$) or low (at most $\frac{1}{c^{0.1k}}$).
  The claims for both cases follow the natural intuition that if the contribution of the secondary players to the contention is low then one of the primary players will successfully broadcast, and if the contribution of the secondary players to the contention is high (but not too high due to the $c^k$-lower bound requirement) then one of the secondary players will successfully broadcast.

\begin{claim}
If at least one primary player arrives at time $0$ and
   $m_{c^{k+2} } \cdot q_{c^{k+2} } \le
    \frac{1}{c^{0.1k}}$, then there
  will be at least one success by a primary player  during interval $(c^{k+2}, c^{k+4}]$, with high probability in $c^k$.
\end{claim}

\begin{proof}
By Lemma \ref{lem:first_success_linear},
with high probability in $c^k$, there exist at least $\sqrt c$ steps in  $(c^{k+2} , c^{k+4} ]$, denoted by $s_1,s_2,s_3 \ldots $, such that for each $s_i$ there is either a success prior to $s_i$ in $(c^{k + 2}, c^{k + 4}]$ or exactly one primary player will broadcast at step $s_i$. For the following, focus on the first $\sqrt c$ of these steps.

For each $1\le i\le \sqrt c$, the secondary players have probability
at most $m_{s_i} \cdot q_{s_i} \le \frac{1}{c^{0.1k}}$ of conflicting
with $s_i$.  Thus, the probability of the secondary players
conflicting with all $s_i$ for $1\le i\le \sqrt c$ is at most
$\frac{1}{c^{0.1kc}}$. If $c$ is a large enough constant, then with
high probability in $c^k$, there exists $1\le i\le \sqrt c$ such that
$s_i$ is a success.
\end{proof}

\begin{claim}
If at least one primary player arrives at time $0$ and
$m_{c^{k+2} } \cdot q_{c^{k+2} } >
    \frac{1}{c^{0.1k}}$, then there
  will be at least one success by a secondary player  during the interval $(c^{k}, c^{k+1}]$.
\end{claim}
\begin{proof}
By the $c^k$-lower bound requirement, for each step $s > c^k$,  we have $m_s \cdot q_s \le \frac{\log c^k}{c}$.
Moreover, by assumption and the monotonicity of $M$ and $Q$, for each $s \le c^{k+2}$, we have $m_{s } \cdot q_{s } >
    \frac{1}{c^{0.1k}}$.
    Thus, for a sufficiently
    large $c$, by Lemma~\ref{lem:sumind}, for each $s\in (c^k, c^{k+2}]$, the probability that exactly one secondary player broadcasts is at least  $\Omega(c^{-0.1k})$

For each step $s\in (c^k, c^{k+1}]$, let $b_s$ be the number of \emph{primary} players that broadcast at step $s$.
Notice that the contribution of primary players to the contention at step $s$ is $\E[b_s]$.
Step $s$ is said to be \defn{light} if $\E[b_s] \le \frac{1}{c}\log c^k$ and \defn{heavy} otherwise.
Since $n\leq c^k$, and since by Lemma~\ref{lem:single_player_num_of_broadcasts}, each one of the primary players broadcasts at most $c  \log_c c^{k+1} = c(k+1)$ times by time $c^{k+1}$,
then the number of broadcasts by all primary players within time range $(c^k,c^{k+1}]$ is at most $c^{k}(k+1) $, and so,
$$\sum_{s = c^k + 1}^{c^{k+1}} \E[b_s] = \E\left[\sum_{s = c^k + 1}^{c^{k+1}} b_s\right] \le c^{k}(k+1).$$
Thus, the number of heavy steps $s\in (c^k, c^{k+1}]$ is at most $c^{k+1}(k+1)/(k\log c)$, which is at most a $\frac{c^{k+1}(k+1)}{(c^{k+1}-c^k)k\log c}$ fraction of the steps in $(c^k,c^{k+1}]$.
Since $c^k =c^{k+1}/c \le c^{k+1}/2$, the fraction of light steps in $(c^k,c^{k+1}]$ is at least
$$ 1-\frac{c^{k+1}(k+1)}{(c^{k+1}-c^k)k\log c} \ge 1-\frac{2c^{k+1}(k+1)}{(c^{k+1})k\log c} = 1-\frac{2(k+1)}{k\log c} > 1-\frac{4}{\log c}.$$
Thus, if $c\geq 32$ then
at least half of the steps in $(c^k,c^{k+1}]$ are light. 

Notice that $b_s$ depends on the random choices made by the primary players that are in the system at time $s$, and, in particular, $b_s$
is a sum of independent random 0-1 variables, each of which takes value $1$ with probability at most $ 1/(c-1) \le 1/2$ for $c\ge 3$.
Thus, for $c\ge 5$, by \lemref{sumind}, every light step $s$ satisfies,
$$\Pr[b_s=0] = \Omega\left(\frac 1 {2^{2\E[b_s]}}\right) = \Omega\left(\frac 1 {2^{\frac{2}{c}\log c^k}}\right)=\Omega\left(\frac 1 {c^{0.4k}}\right).$$

Let $N$ be the number of slots in $(c^k,c^{k+1}]$ in which no primary players are broadcasting. Since at least half of the steps in $(c^k,c^{k+1}]$ are light steps,
$\E[N] = \Omega((c^{k+1}-c^k) / c^{0.4k}) = \Omega(c^{0.6k})$.

In order to obtain a high probability bound on $N$ notice that $N$ is a function of $O(c^k \log_c c^k)$ independent random variables, which are the decisions of when to broadcast for each of the at most $c^k$ players. Moreover, changing a single variable can affect $N$ by at most~1.
It follows by McDiarmid's inequality (Theorem \ref{thm:mcdiarmid}), that with probability $1 - 1/c^{\omega(k)}$, $N$ does not deviate from $\E[N]$  by more than $c^{0.5k}$.
Thus with (very) high probability in $c^k$, $N=\Omega(c^{0.6k})$, and so the
  probability of there being a step in which a single secondary player
  broadcasts and no primary player broadcasts is at least
  $$1 - \left(1 -  \Omega\left(1 / c^{0.1k}\right)\right)^{\Omega(c^{0.6k})},$$
  which is high probability in $c^k$.
\end{proof}

Thus the lemma is proven. \end{proof}

\begin{remark}
Lemmas \ref{lem:first_success_linear} and \ref{lem:firstsuccess} treat
the arrival times and broadcast probabilities of players as being predetermined by an oblivious adversary. Recall that, in general, however, we wish to consider adaptive adversaries against our algorithm. Lemmas
\ref{lem:first_success_linear} and \ref{lem:firstsuccess} will only be
applied to settings in which we are attempting to obtain a first
success, however, and in these settings an adaptive adversary has no
additional power over an oblivious adversary (since there are no
successes to adapt to). One subtlety, however, is that when we apply
Lemmas \ref{lem:first_success_linear} and \ref{lem:firstsuccess}, we
will typically be applying them to only one channel $\alpha$ (i.e.,
the odd-indexed steps or the even-indexed steps), while other activity
occurs on the other channel $\overline{\alpha}$. Critically, the
activity on channel $\overline{\alpha}$ will only affect when (a) new
backoff players arrive on channel $\overline{\alpha}$ and (b) the
values of $m_0, m_1, \ldots$ and $q_0, q_1, \ldots$; and the
randomness used by players in channel $\overline{\alpha}$ will be
independent of the activity (and the random bits used) in channel
$\alpha$ (at least until after the next success in channel
$\alpha$). Thus one can think of the arrival times of players on
channel $\alpha$, and the values of $m_0, m_1, \ldots$ and $q_0, q_1,
\ldots$ as being fully determined by $\overline{\alpha}$ prior
to the application of Lemma \ref{lem:first_success_linear} or
\ref{lem:firstsuccess} to $\alpha$.

\end{remark}

\begin{remark}\label{rmrk:extra_c}
For the sake of avoiding clutter, throughout the rest of the paper we use~\lemref{firstsuccess} without forcing $\tau$ to be a power of $c$. This relaxation adds at most a factor of $c$ to the length of the interval in which there is at least one success with high probability.
\end{remark}

\section{Analyzing Throughput}\label{sec:throughput}

In this section, we analyze the (implicit) throughput of our algorithm, using Lemma \ref{lem:firstsuccess} as an important building block. We will prove the following theorem about implicit throughput:

\begin{theorem}
Recall that a step is active if at least one player is present during
that step. Suppose $n$ players arrive in the first $t$ time steps. Then with high probability in $n$, at most $O(n)$ of the first $t$ time steps are active.
\label{thm:throughput}
\end{theorem}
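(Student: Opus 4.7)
The plan is to partition the first $t$ time steps into \emph{batch intervals}, each one the duration of some batch operation on some channel, and \emph{inter-batch intervals}, during which no batch is in progress. Since every active player is either running a batch protocol (phase 3) or running $c$-backoff (phases 1 and 2), active steps within a batch interval coincide with the batch's true length, while active steps within an inter-batch interval occur exactly when at least one player is present. We will charge the total active length, up to an $O(n)$ additive slack, to a sum of per-batch and per-interval \defn{amortized lengths} that each admit a polynomial tail bound, and then apply \lemref{azuma} to sum them.

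For a batch $\mathcal{B}$ of true length $\ell_\mathcal{B}$ with $b$ participants at its start, define its \defn{amortized length} $L_\mathcal{B}$ to be $0$ whenever the number of successes on $\mathcal{B}$'s channel plus the number of new player arrivals during $\mathcal{B}$ is at least $\gamma \ell_\mathcal{B}$ for a small enough constant $\gamma$, and to equal $\ell_\mathcal{B}$ otherwise. By construction, $\ell_\mathcal{B} \le L_\mathcal{B} + \gamma^{-1}(\text{successes during }\mathcal{B}) + \gamma^{-1}(\text{arrivals during }\mathcal{B})$. I would establish the tail bound $\Pr[L_\mathcal{B} \ge s \mid \text{history before }\mathcal{B}] \le 1/\poly(s)$ in two pieces. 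Inside the \emph{main window} of length $\Theta(b)$ following the start of $\mathcal{B}$, I would adapt the light-versus-heavy inter-success interval argument sketched in the technical overview: partition the window into intervals $I_j$ between consecutive successes on $\mathcal{B}$'s channel, call $I_j$ \emph{light} if it receives fewer than a small constant times $|I_j|$ new arrivals and \emph{heavy} otherwise, and apply \lemref{firstsuccess} with the surviving batch participants playing the secondary (balanced) role and any phase-1 newcomers on the batch channel playing the primary ($c$-backoff) role, to obtain $\Pr[I_j \text{ light and } |I_j| \ge s] \le 1/\poly(s)$. A \lemref{azuma}-style summation across the $I_j$'s then forces either $\Omega(b)$ successes or $\Omega(b)$ arrivals in the main window, driving $L_\mathcal{B}$ to $0$ with high probability in $b$. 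Beyond the main window, each further length-$\Theta(s)$ block during which the batch has not yet terminated requires the silent channel to produce no success, and the $c_2$-jamming of the silent channel by the batch's participants forms another balanced execution while the phase-1 and phase-2 players on the silent channel play the primary role in \lemref{firstsuccess}; this yields an independent $1/\poly(s)$ probability that the batch has not yet ended after $s$ steps past the main window, apart from the amortizable contribution of any new arrivals during the block.

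The inter-batch intervals are simpler. Each begins at time $0$ or immediately after a batch-ending success, and every player present during such an interval is in phase 1 or phase 2, both of which are plain $c$-backoff. The first inter-batch interval before any arrival is inactive and contributes nothing; for every later inter-batch interval, \lemref{firstsuccess} applied with a trivial (empty) secondary execution gives a tail bound $\Pr[L' \ge s \mid \text{history}] \le 1/\poly(s)$ on its length once at least one player is present.

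To sum up, enumerate the batches $\mathcal{B}_1, \mathcal{B}_2, \ldots$ with amortized lengths $L_1, L_2, \ldots$ and the inter-batch intervals with lengths $L'_1, L'_2, \ldots$; both sequences have length at most $n + 1$ since each entry is separated from the next by a success or by a system endpoint, and the total number of successes is at most $n$. The tail bounds above, conditioned on the full history up to the start of each entry, give conditional means bounded by $O(1)$; union-bounding the $1/\poly(s)$ tails across all entries truncates every $L_i$ and $L'_i$ to $O(\log^{c'} n)$ for some constant $c'$ with high probability in $n$. \lemref{azuma} (applied after this benign truncation) then yields $\sum_i L_i + \sum_i L'_i \le O(n)$ with high probability. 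Combined with the per-batch charging inequality and the fact that total successes and arrivals in $[0,t]$ are each at most $n$, this bounds the number of active steps in $[0,t]$ by $O(n)$. The main obstacle in executing this plan is the first piece: rigorously verifying that the surviving batch population, and separately the $c_2$-jamming players on the silent channel, each constitute $(c, c^k)$-balanced executions at the correct scale throughout the batch, with the monotone-size and monotone-probability requirements surviving the depletion of the batch by successes and the decay of the broadcast probabilities $1/i$ and $c_2(\log i)/i$, and in particular handling the discontinuity at the moment the two channels exchange roles.
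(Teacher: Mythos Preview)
Your proposal is essentially the paper's own proof: the same batch/non-batch decomposition, the same truncated-length device (your $L_{\mathcal{B}}$ is the paper's $\overline{l}$ in \lemref{truncatedbatchlengths}), the same light/heavy inter-success argument inside the batch's main window via \lemref{firstsuccess}, the same balanced-jamming argument on the silent channel to bound the batch's tail, and the same \lemref{azuma} summation to finish. One small correction to your picture of inter-batch intervals: they do \emph{not} begin ``immediately after a batch-ending success,'' since by the protocol a batch-ending success on $\overline{\alpha}$ immediately launches the next batch on $\overline{\alpha}$; inter-batch runs occur only after the system has fully emptied and a fresh player arrives, which is precisely the hypothesis of the paper's \lemref{truncatednonbatchlengths} (and also means your two-success phase-1/phase-2 chain needs to be made explicit, as the paper does there).
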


We will also prove a corollary transforming the implicit-throughput result of Theorem \ref{thm:throughput} into a statement about the success times of players.

\begin{corollary}
Call a time-step $t$ $k$-smooth if for all $j \ge k$, the number of
arrivals in steps $t - j + 1, \ldots, t$ is sufficiently small in
$O(j)$. If a time-step $t$ is $k$-smooth, then with high probability
in $k$, all players that entered the system prior to step $t - k + 1$
are no longer in the system after time-step $t$.
\label{cor:suffixes}
\end{corollary}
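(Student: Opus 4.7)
The plan is to argue by contradiction. Assume some player $p$ that arrived at step $s_0 \leq t-k$ is still in the system after step $t$. Because $p$ remains in the system throughout $[s_0,t]$, every step in $[s_0,t]$ is active. Let $s_1$ be the largest index in $\{0,1,\ldots,s_0-1\}$ for which no player is present at step $s_1$ (interpreting the configuration before step~$1$ as empty, so $s_1=0$ is always a valid fallback). By the maximality of $s_1$, every step in $(s_1, s_0-1]$ is also active, and so every one of the $t - s_1 \geq k+1$ steps in $(s_1, t]$ is active.

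Next, I would invoke a ``sub-instance'' form of Theorem~\ref{thm:throughput}. Since no player is present at step $s_1$, the behavior of the protocol on $[s_1+1, t]$ depends only on the players that arrive in $[s_1+1, t]$ together with their private randomness; it is distributionally identical to a fresh run of the algorithm that begins at step $s_1+1$, relabelled as step~$1$, with these arrivals as its input. Applying Theorem~\ref{thm:throughput} to this relabelled run, the number of active steps in $[s_1+1, t]$ is at most $Cm$ with high probability in $m$, where $m$ is the number of arrivals in $[s_1+1, t]$ and $C$ is the constant hidden in the $O(\cdot)$ of the theorem.

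Because the window $(s_1, t]$ has length $t - s_1 \geq k$, the $k$-smoothness hypothesis (applied with $j = t - s_1$) gives $m \leq \epsilon(t - s_1)$, where $\epsilon$ is the ``sufficiently small'' constant built into the definition of smoothness. Combining the two bounds yields $t - s_1 \leq Cm \leq C\epsilon(t - s_1)$. Provided $\epsilon$ is chosen so that $\epsilon < 1/C$ in the definition of smoothness, this is a contradiction. Moreover, the same chain of inequalities forces $m \geq (t-s_1)/C \geq (k+1)/C$, so the ``high probability in $m$'' provided by Theorem~\ref{thm:throughput} translates into ``high probability in $k$,'' delivering the corollary.

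The main obstacle is establishing the sub-instance form of Theorem~\ref{thm:throughput}, since the theorem as stated only concerns prefixes starting at step $1$. The key observation is that the algorithm is memoryless across an inactive step: no in-progress batch operation and no surviving player carries across such a step, so the amortization used in the proof of Theorem~\ref{thm:throughput}---which charges active steps to arrivals and successes via the batch-operation decomposition outlined in Section~\ref{sec:technical}---localizes cleanly to the interval $(s_1, t]$ and produces the needed bound when applied only to the arrivals in that interval.
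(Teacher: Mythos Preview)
Your approach is essentially the paper's: locate the last inactive step $s_1$ preceding $t$, observe that $(s_1,t]$ is a fresh instance of the protocol, and combine Theorem~\ref{thm:throughput} on that sub-instance with $k$-smoothness. Your remark that the algorithm is memoryless across an inactive step is exactly what justifies the sub-instance form of the theorem, and the paper relies on it too.

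The gap is that $s_1$ is a random variable, not a fixed index: it depends on the execution (both the adversary's choices and the players' coins) up through step $s_0$. You therefore cannot apply Theorem~\ref{thm:throughput} a single time to ``the'' sub-instance at $s_1+1$ and read off one failure probability; that is the same fallacy as arguing $\Pr[\exists i:A_i]\le\Pr[A_i]$ for the particular $i$ that happens to occur. The paper closes this by a union bound: for each \emph{fixed} $j\ge k$ it bounds the probability that step $t-j$ is empty while steps $t-j+1,\ldots,t$ are all active by $1/\poly(j)$, and then sums over all $j\ge k$ to obtain $1/\poly(k)$.

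Your attempt to sidestep the union bound via ``$m\ge(t-s_1)/C\ge(k+1)/C$, so high probability in $m$ becomes high probability in $k$'' is circular. The inequality $t-s_1\le Cm$ is the \emph{conclusion} of Theorem~\ref{thm:throughput}, and it is precisely this conclusion that fails on the bad event you are trying to bound. On that event you have no lower bound on $m$ whatsoever---smoothness only gives the upper bound $m\le\epsilon(t-s_1)$---so $1/\poly(m)$ could be a constant. The per-$j$ union bound is what converts the tail into a statement about $k$.
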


We begin the proof of Theorem \ref{thm:throughput} by considering the number of successes within the first $O(n)$ steps of an $n$-player batch operation $\mathcal{B}$, assuming that not too many new players arrive during the execution of the batch operation. In particular, we will use Lemma \ref{lem:firstsuccess} to show that, even though non-batch-operation players can add substantial contention to a given time slot of $\mathcal{B}$, at least a constant fraction of the time slots in $\mathcal{B}$ will, with high probability in $n$, contain \emph{only} batch-operation players; these steps will then guarantee a large number of successes for the batch operation.

\begin{lemma}
Consider a batch operation $\mathcal{B}$ involving $n$ participants in channel $\alpha$, and condition on at most $n / c_1$ players joining the system during the first $\tau = c_1 n$ steps of the batch operation (or the first $|\mathcal{B}|$ steps if $|\mathcal{B}| \le c_1 n$). Then with high probability in $n$, there are $\Omega(n)$ successes during the first $\min(|\mathcal{B}|, \tau)$ steps of the batch operation.
\label{lem:batchsuccesses}
\end{lemma}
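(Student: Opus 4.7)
The plan is to follow the outline sketched in \secref{technical} under ``Analyzing batch operations with low interference''. I would fix the window $I = (n/2,\, n/2 + dn]$ of steps of $\mathcal{B}$ on channel $\alpha$, for $d$ a sufficiently large constant chosen below, and show that with high probability in $n$ at least $n/10$ successes occur inside $I$. The window is anchored at step $n/2$ because the batch-protocol broadcast probability $1/i$ only becomes $\Theta(1/n)$-sized once $i = \Omega(n)$; for smaller $i$ the expected contention among the $\Theta(n)$ batch participants vastly exceeds $1$ and almost every slot is a collision. Since $n/2 + dn \le c_1 n$ for $c_1$ large enough, $I$ lies entirely inside the range over which we have conditioned on few new arrivals.

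The key structural observation is that, restricted to channel $\alpha$ inside $I$, the remaining batch participants execute a $(c,\Theta(n))$-balanced protocol in the sense of Definition~\ref{dfn:balance}: the remaining-participant count $m_s$ is weakly decreasing, the broadcast probability $q_s = 1/s$ is weakly decreasing, and --- so long as the conclusion of the lemma has not already been violated at a prior step --- we have $m_s = \Theta(n)$, so that both the $\tau$-upper and $\tau$-lower bound requirements hold for a suitable $\tau = \Theta(n)$. By Properties~1 and~2 of the algorithm, every non-batch player broadcasting on $\alpha$ during $\mathcal{B}$ is a Phase-1 newcomer running $c$-backoff who vacates $\alpha$ the instant any success is heard. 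This places us exactly in the setting of \lemref{firstsuccess}, with newcomers as primary players and the batch as secondary players.

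Next I would partition $I$ into sub-intervals $I_1, I_2, \ldots$, where $I_j$ runs from the $(j-1)$-st success in $I$ (or from step $n/2$ for $j=1$) up to the $j$-th success. Call $I_j$ \emph{heavy} if the number of arrivals during $I_j$ is at least $\epsilon|I_j|$ for a small fixed $\epsilon > 0$, and \emph{light} otherwise. The total length of heavy sub-intervals is at most $(1/\epsilon)$ times the total arrivals in $I$, which by the conditioning is at most $(1/\epsilon)\cdot n/c_1 \le dn/4$ once $c_1$ is taken large enough relative to $\epsilon$ and $d$. For each light sub-interval, applying \lemref{firstsuccess} (together with Remark~\ref{rmrk:extra_c}) at its starting step yields a tail bound of the form $\Pr[I_j\text{ light and }|I_j|\ge t \mid \text{history up to start of }I_j] \le 1/\poly(t)$, with the polynomial exponent tunable via $c$, since inside a light $I_j$ of length $t$ at most $\epsilon t$ primary players can have accumulated and the balanced-execution hypothesis remains in force.

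The main obstacle is converting these per-interval tails into a high-probability bound on $L := \sum_{j=1}^{n/10} |I_j|\,\mathbf{1}[I_j\text{ light}]$. I would proceed in two steps: first, a union bound over the at most $dn$ possible starting steps shows that, with probability $1 - n^{-\omega(1)}$, no light sub-interval exceeds length $n^{0.01}$; second, after truncating each summand at $n^{0.1}$ (which does not change $L$ under the truncation event), each summand is deterministically $O(n^{0.1})$ and has conditional expectation $O(1)$ given the history, so \lemref{azuma} gives $L \le dn/4$ with high probability. Combining heavy and light bounds yields $\sum_{j=1}^{n/10}|I_j| \le dn/2 < |I|$, which forces at least $n/10$ sub-intervals to fit inside $I$, i.e., at least $n/10$ successes in $I$. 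A separate direct invocation of \lemref{firstsuccess} at step $n/2$ handles the footnote concern by ensuring the very first success inside $I$ appears within $O(n)$ steps with high probability, ruling out the degenerate case in which $I_1$ alone could fill $|I|$.
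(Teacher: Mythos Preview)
Your approach is essentially the paper's: the light/heavy decomposition of the gaps between successes, the application of \lemref{firstsuccess} to light gaps with the batch as secondary players and Phase-1 newcomers as primary players, the use of \lemref{azuma} to sum, and the separate treatment of the first success near step $n/2$ all match the paper's proof closely.

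There is, however, one genuine omission. You fix the window $I=(n/2,\,n/2+dn]$ ``of steps of $\mathcal{B}$'' and proceed as though $\mathcal{B}$ is guaranteed to run for at least $n/2+dn$ steps on channel $\alpha$. But $\mathcal{B}$ terminates the moment a success occurs on $\overline{\alpha}$, and nothing you wrote rules out such a success happening early; if it does, your window $I$ does not lie inside $\mathcal{B}$, the $I_j$'s past the termination point are vacuous, and the conclusion ``$\sum_{j\le n/10}|I_j|<|I|$ forces $n/10$ successes'' no longer follows. The paper handles this explicitly as a first step (its Claim~\ref{claim:batch_size_large}): as long as fewer than $n/2$ participants have succeeded, at least $n/2$ players are running the $c_2$-jamming protocol on $\overline{\alpha}$, each broadcasting there with probability at least $c_2\log(c_1 n)/(c_1 n)$; for $c_2$ large relative to $c_1$ this makes a collision on $\overline{\alpha}$ occur with high probability at every step, so with high probability either $\mathcal{B}$ already has $n/2$ successes or $|\mathcal{B}|\ge\tau$. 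You need this (or an equivalent) statement before the window argument can begin, and it is the only place the jamming protocol enters the proof of this lemma.
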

\begin{proof}

We begin by showing that, with high probability in $n$, either there are at least $n / 2$ successes during the execution of $\mathcal{B}$, or $|\mathcal{B}| \ge \tau$. The proof is based on the intuition that if more than $n/2$ players are executing $\mathcal{B}$ on channel $\alpha$, then these players are, with high probability in $n$, preventing any success from channel $\overline \alpha$.

\begin{claim}
Assuming that $c_2$ is sufficiently large with respect to $c_1$, then with high probability in $n$, there are either at least $n / 2$ successes during $\mathcal{B}$, or $\mathcal{B}$ lasts for at least $\tau$ steps.
\label{claim:batch_size_large}
\end{claim}
\begin{proof}
If there are fewer than $n / 2$ successes (in channel $\alpha$) during the first $\min(|\mathcal{B}|, \tau)$ steps of $\mathcal{B}$ in channel $\alpha$, then during each one of those steps in channel $\overline{\alpha}$ there will be at least $n / 2$ players each broadcasting with probability at least $\frac{c_2 \log \tau}{\tau} = \frac{c_2 \log (c_1n)}{c_1 n}$. For $c_2$ sufficiently large relative to $c_1$, this ensures for each step in $\overline{\alpha}$, that with high probability in $n$, at least two  players broadcast. Thus with high probability in $n$, either there are more than $n / 2$ successes during the first $\min(|\mathcal{B}|, \tau)$ steps of $\mathcal{B}$, or there are no successes in $\overline{\alpha}$ during those steps (in which case $|\mathcal{B}| > \tau$).
\end{proof}

Define $s_0 = 0$, and for $i > 0$, define $s_i$ to be the step-number (counting only steps in $\alpha$) of the $i$-th success during $\mathcal{B}$ (or to be $|\mathcal{B}|$ if there is no such success).
If $s_i > \tau$, then we truncate $s_i$ to be $\tau$ (i.e., we cap each $s_i$ by $\tau$). For $i \ge 1$, let $X_i$ be the interval $(s_{i - 1}, s_i]$.

We prove that, with high probability in $n$,
\begin{equation}
\sum_{i = 1}^{n / 10} |X_i| < \tau.
\label{eq:sum_of_Xi}
\end{equation}
Thus, $s_{n / 10} < \tau$ which together with Claim~\ref{claim:batch_size_large} implies that, with high probability in $n$, either there are at least $n / 2$ successes in the first $\min(|\mathcal{B}|, \tau)$ steps of $\mathcal{B}$, or $s_{n / 10} < \tau \le |\mathcal{B}|$. In both cases, the number of successes in the first $\min(|\mathcal{B}|, \tau)$ steps of $\mathcal{B}$ is at least $n / 10$, as desired.

Call an interval $I = [i_1, i_2]$ \defn{light} if the number of new players that arrive during $I$ is at most $2|I| / c_1$, and \defn{heavy} otherwise. Since at most $n / c_1$ players join the system during the first $\tau$ steps, the sum of the lengths of the heavy intervals $X_i$ satisfies
$$\sum_{\text{heavy }X_i} |X_i| \le n / 2 \le \tau / 2.$$

To prove \eqref{sum_of_Xi}, it therefore suffices to show that
\begin{equation}
\sum_{i \in [1, n / 10], \ X_i \text{ light}} |X_i| < \tau / 2.
\label{eq:sum_of_light_Xi}
\end{equation}

To prove \eqref{sum_of_light_Xi}, we begin by considering intervals $X_i$ where $s_{i - 1} \ge n / 2$. The next claim establishes that the interval $X_i$ is with high probability either heavy (in which case it does not contribute to \eqref{sum_of_light_Xi}) or is relatively small.

\begin{claim}
Consider arbitrary fixed values for $s_1, \ldots, s_{i - 1}$ such that $s_{i - 1} \ge n / 2$, and such that $i \le n / 10$. Then, conditioning on $s_1, \ldots, s_{i - 1}$, and for any value $t \in \mathbb{N}$, we have that, with high probability in $t$, either $|X_i| \le t$ or $X_i$ is heavy.
\label{claim:Xi_heavy_or_small}
\end{claim}
\begin{proof}
Without loss of generality, assume that $t$ is at least a sufficiently large constant, since otherwise the claim trivially holds.

The value of $s_i$ (and hence also of $|X_i|$) is determined by the first success following $s_{i-1}$ (or by the termination of $\mathcal{B}$). In order for $X_i$ to be light and to also satisfy $|X_i| > t$, there must be some $j \ge t$ (specifically $j = |X_i| - 1$) such that the interval $(s_{i - 1}, s_{i - 1} + j + 1]$ in $\alpha$ is light; the interval $(s_{i - 1}, s_{i - 1} + j]$ contains no successes; and $s_{i - 1} + j + 1$ is at most $\tau$ (since $s_i$ is defined to be truncated to at most $\tau$). Define $j^*$ to be the smallest $j \ge t$ such that the interval $(s_{i - 1}, s_{i - 1} + j + 1]$ is light.\footnote{Note that $j^*$ is a function of when new players arrive after step $s_{i - 1}$. Thus $j^*$ is determined by the adaptive adversary who selects player-arrival times. However, since our analysis is only concerned with the value that $j^*$ would take in the event that no successes were to happen in the interval $(s_{i - 1}, s_{i - 1} + j^*]$, we can think of $j^*$ as being a function of only $s_1, \ldots, s_{i - 1}$ (and possibly of random bits used by the adversary). Importantly this means that $j^*$ can be thought of as being determined prior to the execution of steps $s_{i - 1} + 1, s_{i - 1} + 2, \ldots$, rather than being a random variable depending on what occurs in those steps.} Since $j^* \le j$, it must also be that the interval $I_{j^*} = (s_{i - 1}, s_{i - 1} + j^*]$ contains no successes. To prove the claim, it suffices to show that the probability of $I_{j^*}$ containing no successes
is polynomially small in $j^*$ (and thus also polynomially small in $t$).

The rest of the proof establishes that
the execution of $\mathcal{B}$ is $(c, n/2)$-balanced (in $\alpha$), thereby enabling an application of Lemma~\ref{lem:firstsuccess} on $I_{j^*}$.
Let $m_s$ denote the number of players executing $\mathcal{B}$ that are still in the system at time $s$.
Let $q_s$ denote the probability that a player executing $\mathcal{B}$ broadcasts at step $s$. For all $s \in X_i$, we have $q_s = \frac{1}{s}$ and $m_s = n - i + 1$, exactly.
Thus, $\mathcal{B}$ has both the monotone size requirement and the monotone probability requirement

Since $I_{j^*} \cup \{s_{i - 1} + j^* + 1\}$ is light, then there are at most $2(j^* + 1)/c_1$ new players that enter the system during $I_{j^*}$.
Recall that there are at least $9n/10$ players from $\mathcal{B}$ that are still in the system at step $s_{i-1}$ (since $i \le n / 10$), and so for any $s\in I_{j^*}$, we have $9n / 10 \le m_s \le n$.
Moreover, the probability that a player executing $\mathcal{B}$ broadcasts during step $s$ is $q_s = \frac{1}{s}$, and so $m_s\cdot q_s \in [\frac{9}{10} \cdot n / s, n / s]$.
For any $s\in I_{j^*}$, since $s_{i-1} \ge n/2$ it must be that $s>n/2$. Moreover, since $s_i \le \tau$ it must be that $s\le \tau - 1 = c_1n - 1$. Thus, for each $s \in I_{j^*}$, $\frac{9}{10 \cdot c_1} < m_s \cdot q_s \le 2$, and so $\mathcal{B}$ fulfills both the $n/2$-lower bound requirement and the $n/2$-upper bound requirement.
Thus, the execution of $\mathcal{B}$ is $(c,n/2)$-balanced.

We now apply Lemma~\ref{lem:firstsuccess} where the secondary players are  the players participating in $\mathcal{B}$ during $I_{j^*}$, and the primary players are players that are executing $c$-backoff on channel $\alpha$ while $\mathcal{B}$ is being executed during $I_{j^*}$ (these players are a subset of the players that
joined the system during $I_{j^*}$).
Notice that the application of
Lemma~\ref{lem:firstsuccess} is possible since the number of primary players is at most $n/c_1$ which is at most $n/2$ for $c_1 \ge 2$.
Thus, as long as $t$ is at least a sufficiently large constant, it follows by Lemma \ref{lem:firstsuccess} that the probability of not having a success during $I_{j^*}$ is at most $1/\poly(j^*) \le 1/\poly(t)$.
\end{proof}

By Claim \ref{claim:Xi_heavy_or_small} and Lemma \ref{lem:azuma},
$$\sum_{i \in [1, n / 10], \ X_i \text{ light}, \ s_{i - 1} \ge n/2} |X_i| < O(n),$$
with high probability in $n$. For $c_1$ sufficiently large, it follows that with high probability in $n$,
$$\sum_{i \in [1, n / 10], \ X_i \text{ light}, \ s_{i - 1} \ge n/2} |X_i| < \tau / 4.$$

To prove \eqref{sum_of_light_Xi}, it therefore suffices to show that with high probability in $n$,
\begin{equation}
\sum_{i \in [1, n / 10], \ X_i \text{light}, \ s_{i - 1} < n/2} |X_i| < \tau / 4.
\label{eq:early_Xis}
\end{equation}
Notice that $\sum_{\text{light }X_i \subseteq [0, n/ 2]} |X_i|$ is trivially bounded by $n/2 \le \tau / 8$, assuming that $c_1 \ge 4$. Thus the only interval that it remains to consider is the $X_i$ for which $s_{i - 1} < n / 2$ but $s_i > n / 2$ (if such an interval $X_i$ exists). Notice that if  $|X_i| \le \tau / 8$, then the proof will be complete. Assuming that $c_1$ is sufficiently large, the desired statement $|X_i| \le \tau / 8$ is implied by the following claim.

\begin{claim}
With high probability in $n$, there is at least one success in the interval $[n / 2, c^6 \cdot n / 2)$.\footnote{See Remark~\ref{rmrk:extra_c} for an explanation regarding why the exponent here is 6 and not 5.}
\end{claim}
\begin{proof}
Note that the interval $[1, c^6 \cdot n / 2)$ is necessarily light by virtue of the fact that at most $n / c_1$ players can arrive in the first $\tau$ steps of $\mathcal{B}$.
Let $m_s$ denote the number of players executing $\mathcal{B}$ that are still in the system at time $s$.
Let $q_s$ denote the probability that a player executing $\mathcal{B}$ broadcasts at step $s$. For any $s \in [n / 2, c^6 \cdot n / 2)$ (prior to the first success in the interval) we have $q_s = \frac{1}{s}$ and $n / 2 \le m_s \le n$.
Thus, $\mathcal{B}$ has both the monotone size requirement and the monotone probability requirement
Moreover, $m_s \cdot q_s$ is in the range
$[1 / (2 \cdot c^6), 2].$
Assuming $n$ is at least a sufficiently large constant (which is w.l.o.g. since otherwise the entire lemma is immediate), both the $n/2$-upper bound requirement and the $n/2$-lower bound requirement hold for $\mathcal{B}$.

We apply Lemma~\ref{lem:firstsuccess}, where the secondary players are  the players participating in $\mathcal{B}$, and the primary players are players that are executing $c$-backoff on channel $\alpha$ while $\mathcal{B}$ is being executed (these players are a subset of the players that
joined the system during $\mathcal{B}$).
Notice that the application of
Lemma~\ref{lem:firstsuccess} is possible since the number of primary players is at most $n/c_1$ which is at most $n/2$ for $c_1 \ge 2$.
Thus, by Lemma~\ref{lem:firstsuccess}, with high probability in $n$, there is at least one success in the interval $[n / 2, c^6 \cdot n / 2)$, as desired.
\end{proof}

This completes the proof of the lemma.
\end{proof}



Lemma \ref{lem:batchsuccesses} ensures that the first $O(n)$ steps of an $n$-player batch operation $\mathcal{B}$ will either contain a large number of successful broadcasts, or a large number of new player arrivals. The length of the batch operation $\mathcal{B}$ could be substantially larger than $\Omega(n)$, however, which would damage the implicit throughput of the system. The next lemma shows that either the batch operation $\mathcal{B}$ will be small (at most length $O(n)$), or the length of $\mathcal{B}$ can be attributed to a large number of player arrivals during $\mathcal{B}$ (which we will later use to amortize any damage incurred on the implicit throughput by $\mathcal{B}$).

\begin{lemma}
Consider a batch operation $\mathcal{B}$ involving $n$ participants in
channel $\alpha$. Let $l$ be the length of $\mathcal{B}$, which is the number of steps of $B$ in channel $\alpha$.
Define the \defn{truncated length} $\overline{l}$ of $\mathcal{B}$ to be zero if
either the number of successes or the number of new arrivals during
$\mathcal{B}$ is $\Omega(l)$, and to be $l$
otherwise. Then, for all $t \in \mathbb{N}$,
$$\Pr[\overline{l} = t] \le \frac{1}{\poly(t)}.$$
\label{lem:truncatedbatchlengths}
\end{lemma}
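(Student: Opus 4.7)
The proof will proceed by case analysis on $t$, relying on Lemma~\ref{lem:batchsuccesses} for small $t$ and on the self-terminating nature of the $c_2$-jamming protocol (analyzed via Lemma~\ref{lem:firstsuccess}) for large $t$. Throughout, let $\delta > 0$ denote the implicit constant hidden by the $\Omega(l)$ in the definition of truncated length; we will choose $\delta$ sufficiently small relative to $c$, $c_1$, and $c_2$ so that several inequalities below go through.

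First I would handle the regime $t \le c^6 c_2 n$. If $\overline{l} = t$, then by the definition of truncated length, fewer than $\delta t$ players arrive during $\mathcal{B}$; for $\delta$ small enough, this implies that at most $n/c_1$ players arrive in the first $\min(l, c_1 n)$ steps of $\mathcal{B}$, which is exactly the hypothesis of Lemma~\ref{lem:batchsuccesses}. Lemma~\ref{lem:batchsuccesses} then guarantees, with high probability in $n$, at least $\Omega(n)$ successes during those steps. For $\delta$ sufficiently small, this $\Omega(n)$ is at least $\delta t$ throughout the range $t \le c^6 c_2 n$, which contradicts the simultaneously-required bound of at most $\delta t$ successes under $\overline{l} = t$. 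Thus $\Pr[\overline{l} = t] \le 1/\poly(n) \le 1/\poly(t)$, where the final inequality uses $n \ge t/(c^6 c_2)$.

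Next I would handle the regime $t > c^6 c_2 n$. Since $\{\overline{l} = t\} \subseteq \{l \ge t\}$, it suffices to bound $\Pr[l \ge t]$. I apply Lemma~\ref{lem:firstsuccess} on channel $\overline{\alpha}$ with an integer $k$ chosen so that $c^{k+5}$ is $\Theta(t)$ and $c^k \ge c c_2 n$. The ``primary'' $c$-backoff players are the phase-1 and phase-2 players executing $c$-backoff on $\overline{\alpha}$; their count is at most the total number of arrivals during $\mathcal{B}$, which (under $\overline{l} = t$) is at most $\delta l \le \delta t$, and for $\delta$ small relative to $1/c^5$ this is at most $c^k$ as required. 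The ``secondary'' balanced-execution players are the batch jammers themselves: I would verify that the monotone sequences $m_s$ and $q_s = c_2 \log s / s$ make this jamming protocol $(c, c^k)$-balanced, with the lower-bound requirement following from $c^k \ge c c_2 n$ and the upper-bound requirement from $c_2 \ge c^7$ (provided $m_s \ge 1$). Lemma~\ref{lem:firstsuccess} then yields a success on $\overline{\alpha}$ in the interval $(c^k, c^{k+5}]$ with probability $1 - 1/\poly(c^k) = 1 - 1/\poly(t)$, ending $\mathcal{B}$ before step $t$, so $\Pr[l \ge t] \le 1/\poly(t)$.

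The main obstacle will be the case in which the number of active batch participants $m_s$ drops to zero, which invalidates the upper-bound requirement $m_s q_s \ge d \log \tau / \tau$ of the balanced condition. However, $m_s = 0$ requires that all $n$ participants have already succeeded, contributing $n$ to the success count of $\mathcal{B}$; the $\overline{l} = t$ condition then forces $n \le \delta t$, so $t$ is already very large relative to $n$. Once $m_s = 0$, the only broadcasters on $\overline{\alpha}$ are phase-1/2 $c$-backoff players, to which Lemma~\ref{lem:first_success_linear} applies directly (with no secondary protocol needed), bounding the additional time until a success on $\overline{\alpha}$ by a small multiplicative factor of the elapsed time. Combining this post-$m_s$-drops argument with the previous balanced-execution argument for the prefix during which $m_s \ge 1$ yields the desired $\Pr[l \ge t] \le 1/\poly(t)$ in all subcases.
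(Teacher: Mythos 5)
Your proposal follows essentially the same route as the paper: split into a small-$t$ regime (roughly $t = O(n)$) handled by Lemma~\ref{lem:batchsuccesses}, and a large-$t$ regime handled by showing the $c_2$-jamming protocol is a balanced secondary execution on $\overline{\alpha}$ and invoking Lemma~\ref{lem:firstsuccess}, with a separate argument for when all batch participants have succeeded ($m_s = 0$). The case split, the role of Property~3, the verification that the jamming is $(c,\tau)$-balanced, and the identification of $m_s = 0$ as the key obstruction all match.

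The place where your sketch is too loose is the $m_s = 0$ case. You propose a single application of Lemma~\ref{lem:first_success_linear} ``directly'' on $\overline{\alpha}$ to reach a success there. This glosses over two issues the paper handles explicitly. First, the remaining players need not reside on $\overline{\alpha}$: a player arriving shortly before the last batch participant succeeds may still be in Phase~1 and may have arbitrarily chosen $\alpha$, so $\overline{\alpha}$ could be empty of backoff players and the lemma applied there gives nothing. Second, and more fundamentally, Lemma~\ref{lem:first_success_linear} promises a success on the channel occupied by the oldest backoff player, and that channel may well be $\alpha$ rather than $\overline{\alpha}$; a success on $\alpha$ does not terminate $\mathcal{B}$. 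The paper resolves this by applying Lemma~\ref{lem:first_success_linear} once to obtain a success on ``some channel,'' and then, when that success lands on $\alpha$, noting that all remaining players thereupon enter Phase~2 on $\overline{\alpha}$ and applying Lemma~\ref{lem:first_success_linear} a second time to force a success on $\overline{\alpha}$ (which actually terminates $\mathcal{B}$). Without this two-step argument, your ``additional time until a success on $\overline{\alpha}$'' is not justified. Separately, the statement ``it suffices to bound $\Pr[l \ge t]$'' should be ``it suffices to bound $\Pr[l \ge t \ \wedge\ \text{few arrivals and successes}]$,'' which is what you in fact use downstream; as written, the reduction is inaccurate since the adversary can force $l \ge t$ simply by flooding arrivals.
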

\begin{proof}




We begin by considering the case in which $t \le O(n)$. By Lemma \ref{lem:batchsuccesses}, with high probability in $n$, either the number of successes or the
number of arrivals during the first $\min(|\mathcal{B}|, c_1 \cdot n)$ steps is $\Omega(n)$. Therefore, with high probability in $n$ (which is also high probability in $t$), $\overline{l}$ is zero.

Next, consider the case in which $t$ is sufficiently large in
$\Omega(n)$. Claim \ref{clm:high_prob_b_terminate} shows that with high probability there are only three cases:


\begin{claim}\label{clm:high_prob_b_terminate}
Consider a batch operation $\mathcal{B}$ that starts at time $0$ in channel $\alpha$, and has $n$ participants.
If $t$ is sufficiently large in $\Omega(n)$ then during the first $t / c_1$ steps on channel $\alpha$, with high probability in $t$, one of the following occurs:
\begin{itemize}
    \item $\mathcal{B}$  terminates.
    \item  All the participants in $\mathcal{B}$  successfully broadcast.
    \item There are $\Omega(t)$ arrivals of new players.
\end{itemize}
\end{claim}

\begin{proof}

  
Notice that Property 3 of the algorithm ensures that all players performing $c$-backoff on channel $\overline{\alpha}$  began their $c$-backoff after (or at the same time as) the beginning of the batch $\mathcal{B}$.
Let $m_s$ denote the number of players participating in $\mathcal{B}$ during the $s$-th step on channel $\overline \alpha$.
Let $q_s$ denote the probability of a player in $\mathcal{B}$ to broadcast during the $s$-th step on channel $\overline \alpha$.
Notice that the sequences $m_0,m_1,m_2,\ldots$ and $q_0,q_1,q_2,\ldots$ fulfill the monotone size requirement and the monotone probability requirement.

Suppose that $\mathcal{B}$ does not terminate by step $t/c_1$ and that there are fewer than $O(n)$ arrivals of new players by step $t/c_1$, since otherwise the lemma holds automatically.
If $m_{t/c_1}=0$ then the proof is complete. Thus, suppose that $m_{t/c_1}>0$.
Let $\tau = t/(c^6c_1)$, and notice that $\tau = \Omega(n)$ (i.e., $\tau \ge c'n$ for a constant $c'$ of our choice).
Then, for step $\tau\le s\le t/c_1$, $\frac{c_2c_1 \log (t/c_1) }{t}\le q_s \le \frac{c_2 \log \tau }{\tau}$ and $1\le m_s\le n$. Thus, for a sufficiently large constant hidden in the $\Omega(n)$ bound on $\tau$,
$m_sq_s\in [\frac{c\log \tau}{\tau} ,\frac{\log \tau}c]$ .
Thus, both the $\tau$-upper bound requirement and the $\tau$-lower bound requirement hold for $\mathcal{B}$ on channel $\overline \alpha$, and so the execution of $\mathcal{B}$ over channel $\overline \alpha$ is $(c,\tau)$-balanced.
Since the number of new players that arrived in channel $\overline \alpha$ is $O(n)$ and $t$ is assumed to be sufficiently large in $\Omega(n)$, by Lemma~\ref{lem:firstsuccess} (with high probability in $n$)  there is a
success in channel $\overline \alpha$ during the first $t / c_1$ steps, thereby terminating $\mathcal B$, which is a contradiction.
\end{proof}

In the first and third cases of Claim \ref{clm:high_prob_b_terminate},
we have that $\overline{l} < t$. Thus we need only focus on the second
case, in which all of the players of $\mathcal{B}$ successfully
broadcast by step $t/c_1$.

If all of the players of $\mathcal{B}$ successfully broadcast by step $t/c_1$ and there are no more players in the system, then $\overline l < t$. Thus, the remaining option to consider is that there are still players in the system at step $t/c_1$, but none of them are participating in $\mathcal{B}$. 
In this case, by Lemma \ref{lem:first_success_linear}, with high probability in $t$, there is at least one success in some
channel between step $t / c_1$ and step $t / \sqrt{c_1}$. 
If the success takes place in channel 
$\overline{\alpha}$, then $\mathcal{B}$ terminates and thus $\overline l < t$. 
To consider the remaining case, suppose that the success takes place in channel $\alpha$ and let $r$ be 
the number of players in the system at the end of the step. 
If $r = 0$, then $\mathcal{B}$ terminates (and thus again we have $\overline l < t$). 
Otherwise, if $r > 0$, then right after the success all of the remaining $r-1$ players are running $c$-backoff in channel $\overline \alpha$. In this case, 
by applying Lemma
\ref{lem:first_success_linear}, with high probability in $t$,
there is a success in channel $\overline{\alpha}$
prior to step $t$, and again $\overline{l} < t$. 
In every case we get that with high probability in $t$,
$\overline{l} \neq t$.
\end{proof}

So far we have focused on the behavior of batch operations $\mathcal{B}$, showing that with good probability, the length of $\mathcal{B}$ can be charged either to a proportionally large number of successes or to a proportionally large number of new player arrivals. The next lemma focuses on the active steps during which no batch operation is occurring. We show that with good probability, the number of consecutive such steps prior to the start of the next batch operation (or prior to a non-active step) is either small or can be attributed to a large number of player arrivals during those steps.

\begin{lemma}
Consider an arrival at step $t_0$ such that all of the players that are in the system at time $t_0$ were not in the system at step $t_0-1$. 
Let $s$ be the minimum between the number of
steps before the beginning of the next synchronized batch
operation, and the number of steps before the next time $\tau$ where all of the players in the system at time $\tau$ were not in the system at time $\tau-1$. 
Let $\overline{s}$ be $0$ if there are $\Omega(s)$ arrivals during those $s$
steps, or $s$ otherwise. Then for all $t > 0$, $$\Pr[\overline{s} = t +
1]\le \frac{1}{\poly(t)}.$$
\label{lem:truncatednonbatchlengths}
\end{lemma}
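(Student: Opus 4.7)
The plan is to condition on $\overline{s} = t+1$ and show that, with probability $1 - 1/\poly(t)$, a batch actually begins within $t$ steps of $t_0$, yielding a contradiction. Conditioning on $\overline{s} = t+1$ forces $s = t+1$ and caps the number of arrivals during those $s$ steps (including the $n \ge 1$ arrivals at $t_0$ itself) by $\epsilon s$ for any constant $\epsilon$ of our choosing; I pick $\epsilon$ small enough that Lemma~\ref{lem:first_success_linear} is applicable on either channel with a choice of $\tau = \Theta(t)$ whose corresponding threshold $c^k = \tau/c^2$ comfortably exceeds the cap on $c$-backoff participants. By the definition of $s$, no batch begins during these $s$ steps, and because every player present at $t_0$ is a fresh arrival, every broadcast in the window comes from a phase~1 or phase~2 $c$-backoff execution -- the batch and jamming protocols (which live in phase~3) are absent from both channels throughout.

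I first bound the time until the first success on any channel. Before this success no player is in phase~2, so both channels carry only phase-1 $c$-backoff. By pigeonhole at least one channel -- WLOG $\alpha_1$ -- received at least one of the initial arrivals at $t_0$, and since no success has yet occurred on $\alpha_1$ the adaptive adversary can be treated as oblivious (see the remark following Lemma~\ref{lem:firstsuccess}). Applying Lemma~\ref{lem:first_success_linear} to the $c$-backoff players on $\alpha_1$, with $t_0$ relabeled as time~$0$, yields with probability $1 - 1/\poly(t)$ a first success on $\alpha_1$ at some step $t^* \le t/4$. Upon that success, every player then in phase~1 -- regardless of which channel they had been using -- simultaneously transitions to phase~2 on $\alpha_2$ and initiates a fresh $c$-backoff there. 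A second application of Lemma~\ref{lem:first_success_linear} to $\alpha_2$, with time-origin $t^*$ and whose participants are the newly-synchronized phase-2 players together with any phase-1 arrivals that happen to pick $\alpha_2$, produces a success on $\alpha_2$ within another $t/4$ steps with probability $1 - 1/\poly(t)$. This phase-2 success on $\alpha_2$ triggers the phase~3 transition, starting a batch by step $t_0 + t/2$; a union bound over the two applications contradicts $s = t+1$.

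The main obstacle is verifying that both invocations of Lemma~\ref{lem:first_success_linear} are honest: the relevant channel must carry only $c$-backoff during the window (which follows from the absence of any batch on either channel during the $s$ steps), the $c$-backoff population must remain below the $c^k$ threshold dictated by $\tau$ (ensured by our choice of $\epsilon$), and the adversary must be treatable as oblivious (valid because each invocation concerns strictly the first success on its channel). If the initial success happens on $\alpha_2$ rather than $\alpha_1$, the roles of the two channels swap but the argument is otherwise identical.
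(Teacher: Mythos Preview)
Your proposal is correct and follows essentially the same approach as the paper's proof: condition on few arrivals, apply Lemma~\ref{lem:first_success_linear} once to obtain a first success on some channel (pushing all players into phase~2 on the opposite channel), and then apply Lemma~\ref{lem:first_success_linear} a second time on that opposite channel to obtain the success that launches a batch. The paper phrases the first step as ``either $\overline{\alpha}$ succeeds in $t/2$ steps, or apply the lemma to $\alpha$,'' while you pick a channel by pigeonhole and treat an earlier success on the other channel as a role swap; these are the same case split. One small point: your framing ``condition on $\overline{s}=t+1$ and derive a contradiction'' is better read as bounding $\Pr[\overline{s}=t+1]$ directly, and you should note (as the paper does) that if the system empties after the first success then $s$ terminates by the second clause in its definition rather than via a batch.
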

\begin{proof}
If there are $\Omega(t)$ arrivals during the first $t$ steps, then $0 = \overline{s} \neq t$. Thus, the rest of the proof is conditioned on there being a sufficiently small $O(t)$ number of arrivals during the first $t$ steps.

Let $\alpha$ be the channel on which the first player to arrive at time $t_0$ performs their $c$-backoff in Phase 1 of the algorithm. Then either there will be a success in channel $\overline{\alpha}$ during the first $t / 2$ steps after $t_0$, or, by Lemma \ref{lem:first_success_linear}, with high probability in $t$, there will  be at least one success in channel $\alpha$ during the first $t / 2$ steps after $t_0$.
Let $\beta$ be the channel that achieves the first success after time $t_0$, and condition on that success occurring within the first $t/2$ steps after $t_0$. 
When the first success occurs, all players currently in the system will begin Phase 2 of the algorithm on channel $\overline{\beta}$. If there are no such players, then the system must be empty and $s=t/2 < t+1$. Otherwise, apply Lemma \ref{lem:first_success_linear} to deduce that, with high probability in $t$, a success occurs in channel $\overline{\beta}$ within the following $t/ 2$ time steps.
Thus, with high probability in $t$, the first $t$ steps
will contain a success, marking the beginning of a batch
operation, and so $\overline{s} < t + 1$, as desired.
\end{proof}

The preceding lemmas analyze individual components of the algorithm in order to show that when a step in the algorithm takes a large number of steps, the number of such steps can be attributed either to a correspondingly large number of successes or to a correspondingly large number of new player arrivals. To complete the proof of Theorem \ref{thm:throughput}, we use the Azuma-Hoeffding inequality (i.e., Lemma \ref{lem:azuma}) in order to consider all the steps of the algorithm concurrently (rather than each individually), and then we perform an amortization argument in which we charge (almost all) active steps either to successes or to player arrivals, thereby bounding the number of active steps by $O(n)$, the number of player arrivals.

\begin{proof}[Proof of Theorem \ref{thm:throughput}]

Let $l_1, \ldots, l_n$ be the lengths of the first batch operations that take place prior to step $t$ (with $l_i = 0$ if fewer than $i$ such batch operation occur), and let $\overline{l}_1, \ldots, \overline{l}_n$ be the truncated lengths of the batch operations (as defined by Lemma \ref{lem:truncatedbatchlengths}). Similarly, let $s_1, \ldots, s_n$ be the lengths of the runs of active steps prior to step $t$ in which no batch operation is occurring (with $s_i = 0$ if fewer than $i$ such runs occur), and let $\overline{s_1}, \ldots, \overline{s_n}$ be the truncated lengths of the runs (as defined by Lemma \ref{lem:truncatednonbatchlengths}). By Lemmas \ref{lem:truncatedbatchlengths}, \ref{lem:truncatednonbatchlengths}, and \ref{lem:azuma}, we have with high probability in $n$ that
$$\sum_{i = 1}^n \overline{l}_i + \sum_{i = 1}^n \overline{s}_i \le O(n).$$

Since at most $n$ arrivals and at most $n$ successes occur during the first $t$ steps, the definitions of the truncated values $\overline{s}_i$ and $\overline{l}_i$ imply that
$$\sum_{i = 1}^n l_i + \sum_{i = 1}^n s_i \le \sum_{i = 1}^n \overline{l}_i + \sum_{i = 1}^n \overline{s}_i + O(n) \le O(n).$$

Since the left-hand side is the number of active steps, the theorem follows.
\end{proof}

We conclude the section by proving Corollary \ref{cor:suffixes}.
\begin{proof}
Consider the event $E$ that some player $p$ that entered the system prior to step $t - k + 1$ is still present in the system after time-step $t$. Then all of steps $t - k + 1, \ldots, t$ must be active. Therefore, if event $E$ holds, then there exists some $j \ge k$ such that all of the $j$ steps $t - j + 1, \ldots, t$ preceding step $t$ are active, and such that at the end of step $t - j$ no players were in the system. (Note that $j$ may equal $t$ if all steps prior to $t$ have been active.) However, by Theorem \ref{thm:throughput} the probability of any particular $j$ having this property is $\frac{1}{\poly(j)}$. Summing over all $j \ge k$, we get that with high probability in $k$, the event $E$ does not occur.
\end{proof}

\section{Analyzing Energy Efficiency}\label{sec:energy}

The next theorem analyzes the energy efficiency of the algorithm, bounding the total number of broadcast attempts by players to be, on average, at most polylogarithmic. 
\begin{theorem}
  Suppose $n$ players arrive in the first $t$ steps. Then there are at most $O(n \cdot \log^2 n)$ broadcast attempts in the first $t$ steps, with high probability in $n$.
  \label{thm:energy}
\end{theorem}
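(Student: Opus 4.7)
The plan is to account separately for broadcast attempts contributed by each of the three phases of the algorithm, and then to sum them.

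\textbf{Phases 1 and 2.} By Theorem \ref{thm:throughput}, with high probability in $n$ the number of active steps in the first $t$ steps is $O(n)$. Since a player is only present during active steps, every player executes $c$-backoff for at most $O(n)$ time-steps during each of Phases 1 and 2. Lemma \ref{lem:single_player_num_of_broadcasts} then bounds each player's broadcast attempts in these phases by $O(\log n)$, giving a total of $O(n \log n)$ attempts across all players.

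\textbf{Phase 3.} Enumerate the batch operations that begin in the first $t$ steps as $\mathcal{B}_1, \mathcal{B}_2, \ldots$, where $\mathcal{B}_i$ has $n_i$ participants and length $\ell_i$. During $\mathcal{B}_i$, each participant broadcasts on the batch channel with probability $1/j$ at step $j$ and on the jamming channel with probability $c_2 \log j / j$, for an expected total of $\sum_{j=1}^{\ell_i}(1/j + c_2\log j /j) = O(\log^2 \ell_i)$ broadcasts per participant. Since $\ell_i \le O(n)$ (again by Theorem \ref{thm:throughput}), this yields $O(\log^2 n)$ expected broadcasts per participant per batch, and an expected total of $O(\log^2 n)\cdot\sum_i n_i$ Phase 3 broadcasts. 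To bound $\sum_i n_i$, I invoke Lemma \ref{lem:batchsuccesses}: for each $\mathcal{B}_i$, either at least $\Omega(n_i)$ successes occur within its first $\min(|\mathcal{B}_i|, c_1 n_i)$ steps, or at least $n_i/c_1$ new arrivals occur in that same window (otherwise the lemma's hypothesis is met and the first case follows). Hence $n_i \le O(\text{successes in }\mathcal{B}_i) + O(\text{arrivals during the charging window of }\mathcal{B}_i)$, and because batches are disjoint in time (Property~1) with each charging window contained in its batch, summing gives $\sum_i n_i \le O(n)$. This yields an expected $O(n\log^2 n)$ bound on Phase 3 broadcasts.

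\textbf{Main obstacle: concentration.} The step I expect to require the most care is upgrading the expected bound to a high-probability one. The difficulty is twofold: Lemma \ref{lem:batchsuccesses} guarantees its conclusion only with high probability in the (possibly small) batch size $n_i$, and the per-batch broadcast counts are themselves stochastic. My plan is (a) to take the constant in the ``high probability'' of Lemma \ref{lem:batchsuccesses} large enough that a union bound over the (at most $n$) batches shows the amortization $\sum_i n_i = O(n)$ holds with high probability in $n$, handling the few very small batches separately since each contributes only $O(\log^2 n)$ broadcasts and there are at most $n$ of them; and (b) to promote each per-batch expected broadcast count to a high-probability bound via Lemma \ref{lem:azuma} (or McDiarmid's inequality), which applies because within each batch the broadcast count is a sum of independent Bernoullis, each of which changes the total by at most $1$---the same kind of martingale argument that was used in proving Theorem \ref{thm:throughput}. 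Combining (a), (b), and the $O(n\log n)$ bound for Phases 1 and 2 yields the desired $O(n\log^2 n)$ bound with high probability in $n$.
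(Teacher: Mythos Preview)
Your overall decomposition---Phases 1--2 via Lemma~\ref{lem:single_player_num_of_broadcasts} plus the active-step bound from Theorem~\ref{thm:throughput}, then Phase 3 reducing to $O(\log^2 n)\cdot\sum_i n_i$ and bounding $\sum_i n_i$ by $O(n)$---matches the paper. Part (b) is also fine: the broadcast count in a batch of $k$ participants is a sum of independent Bernoullis with mean $\Theta(k\log^2 n)$, so a Chernoff bound already gives the $O(k\log^2 n)$ bound with probability $1-n^{-\omega(1)}$ even when $k=1$; this is exactly what the paper does.

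The gap is in part (a). A union bound cannot work as stated: Lemma~\ref{lem:batchsuccesses} gives failure probability $1/\poly(n_i)$, not $1/\poly(n)$, and no threshold between ``very small'' and ``large'' batches rescues both pieces. If ``very small'' means $n_i<n^\epsilon$, then each such batch contributes up to $n^\epsilon\cdot O(\log^2 n)$ broadcasts (not $O(\log^2 n)$), and with up to $n$ batches the total can be $n^{1+\epsilon}\log^2 n$; if ``very small'' means $n_i=O(1)$, the union bound still fails for batches with $n_i$ between a constant and $n^\epsilon$. The fix is the same tool you cite for (b): apply Lemma~\ref{lem:azuma} to the variables $Z_i$ equal to $n_i$ when batch $i$ fails the successes/arrivals dichotomy and $0$ otherwise; these have conditional expectation $O(1)$, and one truncates at $n^{0.1}$ (the truncation failing with probability $1/\poly(n)$ by a genuine union bound over the at most $n$ batches with $n_i>n^{0.1}$).

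For comparison, the paper bounds $\sum_i n_i$ by a different and somewhat simpler route that avoids Lemma~\ref{lem:batchsuccesses} entirely. It calls a batch \emph{good} if $|\mathcal{B}_i|\ge n_i/2$ and \emph{bad} otherwise. Good batches are handled deterministically: $\sum_{\text{good}}n_i\le 2\sum_i|\mathcal{B}_i|\le 2r=O(n)$ by Theorem~\ref{thm:throughput}. For bad batches, the paper observes directly from the jamming protocol that a batch with $k$ participants terminates before step $k/2$ only with probability $1/\poly(k)$ (since $k/2$ jammers remain throughout), and then applies Lemma~\ref{lem:azuma} to bound $\sum_{\text{bad}}n_i$. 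Charging to batch lengths rather than to successes-plus-arrivals makes the main case a one-line inequality and confines the probabilistic work to the anomalous short batches.
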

\ifconf
We defer the proof of Theorem \ref{thm:energy} to the extended paper \cite{arxiv}.
\fi
\ifarxiv
\begin{proof}
Define $r$ to be the number of active steps in the first $n$ steps. By Theorem \ref{thm:throughput}, with high probability in $n$, $r$ is at most $O(n)$.

For each player, each of Step 1 and Step 2 of the algorithm involves at most $O(\log r)$ broadcast attempts (which with high probability in $n$ is at most $O(\log n)$).

Consider a batch operation involving some number $k$ of participant players. Within the first $O(n)$ steps of the batch operation, the number of broadcast attempts by participant players is bounded above by a sum of independent indicator random variables with mean $O(k \cdot \log^2 n)$. (Specifically, for each player, and each step, there is an indicator random variable corresponding with whether the player will attempt to that broadcast during that step, in the event that the player has not succeeded prior to that step.) By a Chernoff bound, with high probability in
$n$, the total number of broadcast attempts by participant players in the first $O(n)$ steps of the batch operation is no greater than $O(k \cdot \log^2 n)$. 

Let $k_{\mathcal{B}}$ denote the number of participants in a batch operation $\mathcal{B}$, and $|\mathcal{B}|$ denote the length of the batch operation. With high probability in $n$, the sum of the number of broadcast attempts in all batch operations $\mathcal{B}_1, \mathcal{B}_2, \ldots$ that occur (at least partially) within the first $O(n)$ active steps (and thus within the first $r$ active steps, which includes all of the first $t$ steps), is at most 
\begin{equation}
    O(\log^2 n \cdot \sum_i k_{\mathcal{B}_i}).
    \label{eq:sumBi}
\end{equation} 

To complete the proof, it therefore suffices to bound $\sum_i k_{\mathcal{B}_i}$ by $O(n)$, with high probability in $n$. We will consider only the $\mathcal{B}_i$'s contained entirely within the first $t$ steps (ignoring the at-most-one $B_i$ that begins in the first $t$ steps but finishes after, since that $B_i$ can contribute at most $n$ to $\sum_i k_{\mathcal{B}_i}$). 

There are two types of $\mathcal{B}_i$'s, \defn{good} batch operations $\mathcal{B}_i$ for which $|\mathcal{B}_i| \ge k_{\mathcal{B}_i} / 2$, and \defn{bad} batch operations $\mathcal{B}_i$ for which $|\mathcal{B}_i| < k_{\mathcal{B}_i} / 2$. Since $\sum_i |\mathcal{B}_i| \le r$, the sum 
$$\sum_{\mathcal{B}_i \text{ good}} k_{\mathcal{B}_i}$$
is at most $2r$ (which, in turn, is at most $O(n)$ with high probability in $n$). 

Next we consider the bad $\mathcal{B}_i$'s. Call $B_i$ \defn{$k$-bad} if $k = k_{\mathcal{B}_i}$ and $\mathcal{B}_i$ is bad. Each batch operation $\mathcal{B}$ lasts for at least $k_{\mathcal{B}} / 2$ steps with high probability in $k_{\mathcal{B}}$, since during each of those steps there will be at least $k_{\mathcal{B}} / 2$ participants remaining in $\mathcal{B}$, and with high probability in $k_{\mathcal{B}}$ those participants will successfully jam (i.e., cause collision-based failures) on the channel that $\mathcal{B}$ is not occurring on. Thus for a given $B_i$ and $k$, with high probability in $k$, $\mathcal{B}_i$ is not $k$-bad, regardless of the outcomes of $\mathcal{B}_1, \ldots, \mathcal{B}_{i - 1}$. By Lemma \ref{lem:azuma}, it follows that the $\sum_{\mathcal{B}_i \text{ bad}} k_{\mathcal{B}_i}$ is, with high probability in $n$, at most $O(n)$, completing the proof of the theorem. (Note that we implicitly use here that the number of $\mathcal{B}_i$'s is at most $n$, since each $\mathcal{B}_i$ begins on a successful broadcast.)
\end{proof}
\fi
\section{The Power of Jamming}\label{sec:jamming}

In this section we consider contention resolution in the presence of a \defn{jamming adversary}, which is able to \defn{jam a step} by forcing it to unconditionally fail at successfully transmitting. 

We show that, in the presence of jamming, even achieving a \emph{single} success becomes difficult. This demonstrates a fundamental separation between contention-resolution with and without collision detection. In particular, algorithms in which players are able to perform collision detection can not only achieve a first success, but can also achieve high implicit throughput even in the presence of adversarial jamming \cite{ChangJiPe19}.

We begin by defining what it means for an algorithn whose goal is to achieve a single success to be resistant to jamming:
\begin{definition}
Consider a contention resolution algorithm $\mathcal{A}$ and let $c$ be a
sufficiently large constant. Consider a sequence of steps such that one player arrives at time $0$; at most $n / c$ additional players arrive before time $n$; all arriving players perform algorithm $\mathcal{A}$; and at most $n / c$ steps are jammed by an adversary. We say that $\mathcal{A}$ is \defn{$f(n)$-resilient} to jamming if in this
setting, at least one success occurs in the first $n$ steps with
probability at least $1 - f(n)$ (and for any value of $n$).
\end{definition}

The next theorem establishes that no contention resolution algorithm $\mathcal{A}$ can handle jamming with better than constant probability. 
\begin{theorem}
No algorithm can be $f(n)$-resilient to jamming for any 
monotone decreasing $f(n)$ satisfying $f(n) \le o(1)$.
\end{theorem}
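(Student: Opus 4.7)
The plan is to exhibit, for any algorithm $\mathcal{A}$, an oblivious adversary using at most $n/c$ clone arrivals at time $0$ and at most $n/c$ jammed slots that with probability at least some constant $\epsilon > 0$ prevents any success in the first $n$ steps. Since $f(n) = o(1)$, this contradicts $f(n)$-resilience for all sufficiently large $n$.

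First I would analyze the single-player baseline. Let $p_i$ denote the probability that a lone player that has not yet broadcast does so at step $i$; $\mathcal{A}$'s success against the trivial adversary forces $\prod_{i=1}^n (1-p_i) \le f(n)$, hence $\sum_i p_i = \omega(1)$. Next, for a parameterized adversary that injects $k \le n/c$ clones at step~$0$ and jams a set $J$ with $|J|\le n/c$, note that as long as no success has yet occurred all $k+1$ players share the same channel history; their first-broadcast times are therefore i.i.d.\ with the lone-player distribution, so the probability of a success at step $i$ conditional on no prior success is at most $s_i(k) := (k+1)\,p_i\,(1-p_i)^k$. Using $1-x \ge e^{-2x}$ for $x \le 1/2$,
\begin{equation*}
\Pr[\text{no success in }[1,n]] \;\ge\; \exp\!\Bigl(-2\sum_{i \notin J} s_i(k)\Bigr),
\end{equation*}
so it suffices to choose $k$ and $J$ that make the sum $O(1)$.

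The proof then splits into two cases according to the shape of the $p_i$'s. \textbf{Case A (concentrated):} if jamming the top $n/c$ values of $p_i$ leaves $\sum_{i\notin J} p_i = O(1)$, take $k=0$ and that $J$. The sum above becomes $\sum_{i \notin J} p_i = O(1)$ and we are done. \textbf{Case B (spread):} otherwise the complementary sum is $\omega(1)$. Partition the indices dyadically into $B_j = \{i : p_i \in (2^{-j-1}, 2^{-j}]\}$ for $j=0,\dots,\lceil\log n\rceil$. A pigeonhole over the $O(\log n)$ buckets, combined with the $\omega(1)$ mass outside the top $n/c$ and with $\sum p_i = \omega(1)$, locates a level $j^*$ where $|B_{j^*}|\cdot 2^{-j^*}$ is not too small and $|B_{j^*}|$ is comparable to $n/c$. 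Pick $k^* = \Theta(2^{j^*}\log n)$ (verifying $k^* \le n/c$ once the boundary case $j^* \approx \log n$ is shown to fold into Case~A). Then the non-jammed steps split into (i) steps with $p_i \gtrsim 2^{-j^*}$, for which $k^* p_i = \Omega(\log n)$ makes $(1-p_i)^{k^*}$, and hence $s_i(k^*)$, polynomially small in $n$, contributing $o(1)$ to the sum; and (ii) steps with $p_i \ll 2^{-j^*}$, where $s_i(k^*) \le (k^*+1) p_i$ and a dyadic tail bound calibrated by $k^*$ yields $O(1)$ contribution. The ``dangerous'' band of constant dyadic width around $j^*$ is exactly the jamming set $J$, which by the selection of $j^*$ has $|J| \le n/c$.

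In both cases $\sum_{i\notin J} s_i(k) = O(1)$, so the failure probability is at least $e^{-O(1)} = \Omega(1)$, contradicting $f(n)$-resilience once $n$ is large enough that $f(n) < \epsilon$. The main obstacle is the bookkeeping in Case~B: verifying that the required dyadic level $j^*$ exists and that $k^*$ and the jamming band can be simultaneously calibrated so that the three classes of non-jammed steps (collision-killed, probability-killed, and jammed) partition $[n]$ while respecting the budgets $|J|, k^*\le n/c$. A secondary subtlety is that a clone whose first broadcast was jammed or collided may later re-broadcast under the algorithm's rules, but since we only need an upper bound on per-step success probability and a lower bound on the no-success probability, the ``first-broadcast'' simplification yields a valid (one-sided) bound.
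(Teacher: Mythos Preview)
Your proof has a genuine gap at its core: the ``first-broadcast'' simplification is \emph{not} a valid one-sided bound, despite the reassurance in your final paragraph. The quantities $p_i$ describe only what a player does before it has ever broadcast; once a player's first broadcast collides or is jammed, its subsequent behaviour is entirely algorithm-dependent and is not governed by $p_i$. A success at step $i$ can come from a player that is \emph{re}-broadcasting, and nothing in $s_i(k)=(k+1)p_i(1-p_i)^k$ accounts for that. Concretely, take $\mathcal{A}$ that broadcasts with probability $1$ at step $1$ and, after a failure, broadcasts with probability $1/2$ at each later step. Then $p_1=1$ and $p_i=0$ for $i\ge 2$, so $s_i(k)=0$ for all $i\ge 2$; yet with three clones the conditional success probability at step $2$ is $3/8$. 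Your inequality $\Pr[\text{no success}]\ge\exp(-2\sum_{i\notin J}s_i(k))$ therefore does not hold.

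Even if one grants the first-broadcast bound, the Case~B bucketing is too loose to close. Take $p_i=(\log n)/n$ for every $i$ (this satisfies $\prod(1-p_i)\approx 1/n=o(1)$). Case~A fails since removing any $n/c$ indices still leaves $\sum p_i\approx\log n$. In Case~B all $n$ indices sit in a single dyadic level, so the ``dangerous band'' is all of $[n]$ and cannot be jammed within the $n/c$ budget; and for any admissible $k^*\le n/c$ one computes $\sum_{i}(k^*+1)p_i(1-p_i)^{k^*}=\omega(1)$, so your target $O(1)$ bound is unattainable. The pigeonhole you invoke does not guarantee a level that is simultaneously heavy in mass and small in cardinality.

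The paper's proof avoids both issues by a two-adversary argument that bounds the \emph{total} broadcast rate rather than first broadcasts. First, a random-jamming adversary (jam the first $n/(2c)$ steps, then $n/(2c)$ uniformly random further steps) forces any $f(n)$-resilient algorithm's lone player to make $\Omega(\log(1/f(n)))$ broadcasts in expectation in $(n/(2c),n]$; iterating over dyadic scales yields $\omega(\log n)$ expected broadcasts in the first $n$ steps. Second, a random-\emph{arrival} adversary jams $[1,n/c]$ and inserts $n/c-1$ players at uniform random times; by the first step, each late slot then has $\omega(\log n)$ expected broadcasters and hence a collision with high probability. The key difference is that the first adversary lower-bounds the full broadcast count (including re-broadcasts), which is exactly the quantity your $p_i$'s cannot see.
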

\begin{proof}
Suppose $\mathcal{A}$ is $f(n)$-resilient to jamming for some monotone decreasing function $f(n)$ satisfying $f(n) \le o(1)$. 

Consider an adversary that first jams the system for $\frac{n}{2c}$ steps (for convenience assume $n$ is a power of $2c$), and then selects at random (with replacement) $\frac{n}{2c}$ additional steps in $[n]$ to jam. With probability at least $1 - f(n)$, $\mathcal{A}$ must achieve at least one success against this adversary, even if there is only a single player $p$ in the system (who arrives at time one). Define $x$ to be a value sufficiently small in $O(\log_{1/c} f(n))$. If the player $p$ attempts to broadcast fewer than $x$ times in the interval $(n / 2c, n]$, then $p$ has a greater than $2 \cdot f(n)$ probability of failing to successfully transmit; thus $\mathcal{A}$ must guarantee that with probability at least $1/2$, $p$ broadcasts at least $x$ times during the interval $(n / 2c, n]$. Hence the expected number of broadcasts by $p$ in interval $(n / 2c, n]$ is $\Omega(\log \frac{1}{f(n)})$.

Summing over the intervals $(1, 2c], (2c, 4c^2], \ldots, (n / 2c, n]$, we get that the expected number of broadcasts by a player in its first $n$ steps is asymptotically larger than $\log n$ (i.e., $\omega(\log n)$). We will now exploit this property, in order to reach a contradiction. 

Consider a new adversary that jams only the steps $1, \ldots, n / c$, inserts one player into the system prior to step one, and then inserts $n / c - 1$ players each at an independent random time step in $1, \ldots, n$. For any step $t \in (n / c, n]$, and for a  player $p$ that is inserted at a random time step, we claim that the probability that $p$ broadcasts at time $t$ is at least $\omega(\log n / n)$. In particular, with probability at least $1/c$, $p$ arrives at a random time step in the range $(t - n / c, t]$, in which case $t$ appears to $p$ to be a random step in the range $[1, n / c]$. Since the expected number of broadcasts by $p$ in its first $n /c$ steps is $\omega(\log n)$, the probability of $p$ broadcasting at time $t$ is $\omega\left(\frac{\log n}{n}\right)$. 

For a given time step $t \in (n / c, n]$, we have shown that the number of players that attempt to broadcast in $t$ is a sum of independent indicator random variables with mean at least $\omega(\log n)$ (i.e., asymptotically greater than $\log n$). It follows that, with high probability in $n$, there are multiple broadcast attempts at time $t$. Since this is true for every $t \in (n / c, n]$, we find that with high probability in $n$, the algorithm $\mathcal{A}$ fails to achieve any successes in the first $n$ steps. This contradicts the fact that $\mathcal{A}$ is $f(n)$-resilient to jamming.
\end{proof}

\ifarxiv
\appendix
\section{Missing Proofs}\label{app:missing_proof}

\begin{proof}[Proof of Lemma~\ref{lem:azuma}]
  Consider the sequence of random variables $Y_1, Y_2, \ldots, Y_n$
  such that $Y_i = X_1 + \cdots + X_i + (n - i) \cdot c$. Then for any
  values $a_1, \ldots a_i$ of $X_1, \ldots, X_i$, we have that
  $$\E[Y_{i + 1} \mid X_1 = a_1, \ldots, X_i = a_i] \le Y_i.$$

  Thus the sequence $(Y_1, \ldots, Y_n)$ is a
  super-martingale. Additionally, the $Y_i$'s have bounded
  differences, satisfying $|Y_i - Y_{i + 1}| \le O(n^{0.1})$
  deterministically. By the Azuma-Hoeffding inequality for
  super-martingales with bounded differences, we get that
  $$\Pr[Y_n \ge \E[Y_n] + t] \le \exp\left(-t^2 /
  O(n^{1.2})\right).$$ Noting that $\E[Y_n] = \sum_i \E[X_i] \le
  O(n)$, and plugging in $t = n$, we see that with probability at
  least $1 - e^{-\Omega(n^{0.8})}$, the inequality $\sum_i X_i \le O(n)$ holds.
\end{proof}

\begin{proof}[Proof of Lemma~\ref{lem:sumind}]
 Expanding $\Pr[X= 1]$ yields
$$    \Pr[X = 1]  = \sum_{i = 1}^t p_i \cdot \prod_{j \neq i} (1 - p_j) \\
                \ge \left(\sum_i p_i\right) \cdot \left(\prod_{j} (1 - p_j)\right).$$
  If $\sum_i p_i \le 1/2$, then the right side is at least $1/2$, and
  thus $\Pr[X = 1] \ge \Omega(\sum_i p_i)$. If, on the other hand,
  $\sum_i p_i \ge 1/2$, then
  $$\Pr[X = 1] \ge \frac{1}{2} \left(\prod_{j} (1 - p_j)\right).$$

  Since the product $\prod_{j} (1 - p_j)$ is only decreased
  when we increase the difference between two $p_j$'s (while
  maintaining their sum), it follows that the product is minimized by
  setting $\lfloor 2\sum_i p_i \rfloor$ of the $p_j$'s to
  $1/2$, and all of the other $p_j$'s (except possibly for one $p_j$
  as an edge case) to zero. Thus
  $$\Pr[X = 1] \ge \Omega\left(1/2^{\left(2 \sum_i p_i\right)}\right) = \Omega\left(1/2^{\left(2 \E[X]\right)}\right).$$

  By the same analysis of $\prod_{j} (1 - p_j)$, we also have that
  $$\Pr[X = 0] = \prod_{j} (1 - p_j) \ge \Omega\left(1/2^{\left(2 \sum_i p_j\right)}\right) = \Omega\left(1/2^{\left(2 \E[X]\right)}\right).$$
\end{proof}

\fi

\balance
\bibliographystyle{ACM-Reference-Format}
\bibliography{./backoff,./fix-naming-backoff,./fix-naming-jam}


\end{document}